\documentclass[11pt]{article}
\usepackage{amsfonts, amssymb, amsmath, mathrsfs, amsthm, float, enumitem, verbatim, bm, appendix, relsize, 
                   tabularx}
\usepackage[margin=0.75in]{geometry}
\usepackage[hang,flushmargin]{footmisc}
\numberwithin{equation}{section}
\DeclareMathOperator*{\argmax}{arg\,max}
\newtheorem*{definition}{Definition}

\newtheorem{theorem}{Theorem}[subsection]

\title{Pragmatic Information, Computation,  and the Efficient Market Hypothesis}
\author{Edward D. Weinberger\\Department of Finance and Risk Engineering\\NYU Tandon School of Engineering}
\begin{document}
\maketitle
\abstract{We address a long standing gap in standard information theory, namely the inability of that theory to assign a measure to the amount of meaning in a transmitted message. \cite{Weinberger24} argues for a particular quantitative measure of meaning that \cite{Weinberger24} calls pragmatic information, which has many of the properties expected of it. We then prove that the amount of pragmatic information of a given message that can be extracted by a given receiver depends on the computational capacities of the receiver, in particular, the receiver's ability to recognize symbol strings at various levels of complexity within the Chomsky hierarchy of formal languages.  A string may appear essentially random to a given receiver, but not to a receiver at a higher level in the hierarchy; hence, this receiver may not be able to extract any pragmatic information from such a string, even though another receiver at the appropriate level in the hierarchy could. Also, the maximum processing rates at which messages of different levels of complexity serve as a kind of pragmatic channel capacity, leading to a tradeoff between the amount of pragmatic information extracted and the extraction time. We then propose a re-framing of the efficient market hypothesis of quantitative finance to that of a participant-specific ``computational efficiency'', {\it i.e.} the claim that participants lack the computational resources necessary to use available pragmatic information to ``beat the market''. Given that market participants vary widely in computational resources, it is therefore no surprise some participants will find a given market computationally efficient, even though others will find inefficiencies. We argue that this situation will persist even in the face of any conceivable increase in compute.} 

\pagebreak

\parindent 0px
\maketitle
\section{Introduction}
The standard measure of the amount of information in a transmitted message says nothing about the ``meaningfulness'' of the message.  However, numerous authors have attempted to put forth a theory of ``pragmatic information'' that addresses precisely that missing measure of meaning.  One such paper, \cite{Weinberger24}, proposes that a message acquires meaning from a change in the receiver's prior beliefs, as expressed by the likelihood that the receiver makes a particular decision.  If so, the computational abilities of the receiver have much to say about the the amount of pragmatic information that it can process,
the primary topic of the present paper.  To make this presentation, we first review the definition of pragmatic information in \cite{Weinberger24}, a few of its relevant properties, and some relevant terminology.  Even before that, however, we contrast the theory presented in \cite{Weinberger24} with the subject commonly known as ``information theory''.  As Warren Weaver famously observes \cite{ShannonAndWeaver62}, the effectiveness of a communications process could be measured by answering any of the following three questions: \\
\begin{enumerate}
\item[A.] How accurately can the symbols that encode the message be transmitted (``the technical problem")?
\item[B.] How precisely do the transmitted symbols convey the desired meaning (``the semantics problem")?
\item[C.] How effective is the received message in changing conduct (``the effectiveness problem")?\\
\end{enumerate}
It is, by now, well known that ``standard'' information theory concerns itself only with the answer to above question A.  In other words, the standard theory distinguishes only between correctly and incorrectly transmitted symbols, regardless of their meaning;
thus, the standard theory makes no distinction between the correct transmission of a hot tip about the next big AI stock and the correct transmission of
a jumble of random letters.  Pragmatic information, in contrast, distinguishes between the stock tip and the random jumble by virtue of the fact that the former will likely lead the 
receiver to buy the stock (the ``changed conduct'' of question C, above) and the latter will not\footnote{A particularly compelling example of the distinction between mere transmission of symbols and the conveyance of 
meaning is found in an apocryphal story about the
great French writer Victor Hugo, who left for vacation immediately after submitting the manuscript of his masterpiece {\it Les Mis\'erables} to his publisher and before the actual publication date of the novel.  Not wanting to write much during his vacation, but thirsting for news about the reception of his work, Hugo sent his publisher a letter consisting of the single character ``?'', to which the publisher indicated the book's commercial success by replying simply ``!''.}.\\

Thermodynamics makes the distinction between the so-called ``free energy'' of a system, or the 
energy available to do work and the total energy of a system, which also includes, for example, heat energy.  
\cite{Weinberger24} makes the case that the same distinction should be made between pragmatic information and the total 
information of a message.  \cite{CrutchfieldID} made a similar distinction by introducing the term ``free information'' to 
describe essentially the same quantity, in the special case described in the ``Partitions of $\mathcal M$'' section below.\\

One motivation for this work is that it might help understand the efficient market hypothesis of financial economics, as will be seen in a subsequent section of this paper.  
Other applications will, no doubt, present themselves to the mind of the reader 
(One such is the intriguing finding that there is an average rate at which information is conveyed by human language,  independent of the speaker or the language spoken \cite{SpeakingRate}, suggesting an intrinsic neurological processing speed
of the kind we discuss below.).\\

The next section reviews our definition of pragmatic information, some of its more important properties, and some useful terminology, 
all of which are set forth in more detail in \cite{Weinberger24}.   Our primary purpose in presenting these results is to preface this paper's subsequent sections:  it is only by convincing readers that our notion of pragmatic information really is a ``measure of meaning'' that they might care about what
these subsequent sections have to say about the role that the receiver's computational abilities play in extracting this meaning
and the relevance of all of this to the efficient market hypothesis.

\section{A Definition of Pragmatic Information and Some Results that Justify It}

If the practical meaning of a message stems from its usefulness in making decisions, the message must do so by changing those beliefs about the state of the world that will inform the subsequent action.  
Accordingly, we consider a decision maker, $\Delta$, which may be a machine, a human, or a non-human organism.  We assume that $\Delta$'s decisions are informed by beliefs about a random variable, $\omega$, 
which we assume to have the discrete sample space $\Omega =\{\omega_1, \omega_2, \ldots, \omega_i,\ldots, \omega_N \}$.  
$\Delta$ has {\it a priori} beliefs about $\omega$, reflected by $\Delta$'s assignment of the
 positive, prior probabilities  $\mathbf q = ( q_1, q_2, \ldots, q_N)$ to the various possible values of $\omega$.  We
refer to $\mathbf q$ as prior probabilities because we then supply $\Delta$ with a ``message", $m$, 
of arbitrary length, chosen from some ensemble of possible messages ${\mathcal M}$, each with probability $\varphi_m$.  
As a result, $\Delta$ updates its beliefs about the state of the world, reflected in a change from $\mathbf q$ to $\mathbf p_m = 
( p_{1|m}, p_{2|m}, \ldots, p_{N|m})$.  We assume that the update is made via an abstract automaton, $\mathfrak A_\Delta$, 
though we leave open the 
possibility that the update is actually implemented by a human.  The update might involve the interaction of $m$ with internal 
data, $\mathfrak d$, which is already stored within $\Delta$ and which is chosen 
from some ensemble of possible data items, ${\mathfrak D}$. We consider only the simplest case, in which $\mathfrak d$ is chosen once 
and for all and is always the same for all input messages $m$.  We will write $m^+$ when we want 
to explicitly include $\mathfrak d$, as we will when we want to consider the length of the data stream 
presented to $\Delta$.\\

Given the above, \cite{Weinberger24}  proposed the following
\footnote{While we don't think that the exact definition proposed in \cite{Weinberger24} has appeared in the literature before, close 
cousins of it have.  See the discussion in \cite{Weinberger24} where 
it is emphasized that the new idea in \cite{Weinberger24} is that $\Delta$'s prior beliefs about $\omega$, embodied in $\mathbf q$ 
can be just plain wrong.}
\footnote{In Section 4 of this paper, we make a generalization of \eqref{eq: BasicDef} to infinite sequences.}

\begin{definition}
The {\bf pragmatic information}, ${\Phi_\Delta (\mathcal M; \Omega)}$,  of an ensemble of messages ${\mathcal M}$ is 
\begin{align}
 \Phi_\Delta (\mathcal M; \Omega) &= \sum_{i, m} \varphi_m  p_{i|m} \log \left(\frac{p_{i|m}}{q_i}\right)  \label{eq: BasicDef}\\
                                                  &= \sum_{i, m}  p_{i,m} \log \left(\frac{p_{i|m}}{q_i}\right),
\end{align}
where all logarithms are assumed to be base 2 logarithms and we take $p_{i|m} \log \left(\frac{p_{i|m}}{q_i}\right) = 0$ if $p_{i|m} =0$.
\end{definition}

\begin{definition}
The {\bf conditional pragmatic information} of messages in ${\mathcal M'}$ acting upon $\Delta'$, given that $m \in \mathcal M$ is acting on $\Delta$, as well as the resulting outcome $\omega$, is
\begin{align*}
 \Phi_{\Delta'| \Delta} (\mathcal M'; \Omega' | \mathcal M, \Omega) = \sum_{i, i', m, m'} p_{i, i', m, m'} \log \left(\frac{p'_{i'|i,m,m'}}{q_{i'|i}}\right) 
\end{align*}
\end{definition}

\cite{Weinberger24} noted several features of these definitions that make pragmatic information both an information measure and a plausible measure of the amount of meaning in the messages in ${\mathcal M}$:
\begin{itemize}
\item $\Phi_\Delta (\mathcal M; \Omega) \geq 0$. with $\Phi_\Delta (\mathcal M; \Omega) > 0$ unless $\mathbf p_m = \mathbf q$ for all $m \in \mathcal M$.
\item When $\mathbf p_m = \mathbf u_k$, {\it i.e.} a unit vector in the 
$k^{\rm th}$ direction for some $k$, we say that $m$ is {\it pragmatically definitive}.   $\Phi_\Delta (\mathcal M; \Omega)$ 
achieves its maximum among the pragmatically definitive $\mathbf p_m$'s, 
assuming that there are any.  If so, for a given set of prior probabilities, $\mathbf q$, let 
$k(m) = \argmax (-\log q_l)$, with the max taken over all of the $q_l$'s for which $p_{l|m} > 0$.  The resulting maximum is then 
\begin{align}
\Phi_\Delta (\mathcal M; \Omega) \leq -\sum_m \varphi_m \log q_{k(m)}.   \label{eq: MaxPragInfo}
\end{align}
\item If the decisions of makers $\Delta$ and $\Delta'$ are, respectively, informed by messages $m \in \mathcal M$ and $m' \in 
\mathcal M'$,  with respective prior probabilities $\mathbf q$ and $\mathbf q'$ and with corresponding {\rm a posteriori} 
probabilities $\mathbf p_m$ and $\mathbf p'_{m'}$, then
$$
 \Phi_{\Delta, \Delta'} (\mathcal M; \Omega, \Omega')=\Phi_{\Delta} ({\mathcal M}; \Omega)  +
\Phi_{\Delta'| \Delta} (\mathcal M'; \Omega' | \mathcal M, \Omega).
$$
If, in addition, 
\begin{align}
q'_{i'|i} = q'_{i'} {\rm \quad and \quad} p_{i, i', m, m'} = p_{i, m} p'_{i', m'}  {\rm \ for \ all \ } i, i', m, m', 
\end{align}
then 
\begin{align}
 \Phi_{\Delta, \Delta'} (\mathcal M, \mathcal M'; \Omega, \Omega')=\Phi_{\Delta} ({\mathcal M}; \Omega)  +
\Phi_{\Delta'} ({\mathcal M'}; \Omega')   
\end{align}
In other words, pragmatic information is additive for probabilistically independent message ensembes and completely 
independent decision makers.  However, mere probabilistic independence of the message ensembles is insufficient to guarantee 
pragmatic independence.
\item $\Phi_\Delta (\mathcal M; \Omega)$ is the expected number of extra bits required for the binary encoding of samples from $\mathbf p_m$, averaged over $m \in \mathcal M$, using an encoding optimized for $\mathbf q$, rather than,
for each $m$, an encoding optimized for that $\mathbf p_m$.  In other words, $\Phi_\Delta (\mathcal M; \Omega)$ is the expected number of bits that $\Delta$ has learned by processing $m \in \mathcal M$.
\item Per Theorem 5.0.1 of \cite{Weinberger24}, if $\bm \phi$ is the vector of probabilities  whose $i^{\rm th}$ component is 
given by $\phi_i = \sum_{m \in \mathcal M} p_{i,m}$ for all $i$, $D_\Delta(\bm \phi || \mathbf q)$ is the (non-negative) Kullback-Leibler divergence of $\bm \phi$ and $\mathbf q$, and
$\mathcal I (\mathcal M; \Omega)$ is the mutual information between $\mathcal M$ and $\Omega$,
\begin{align}
\Phi_\Delta({\mathcal M}; \Omega) &= \mathcal I (\mathcal M; \Omega) + D_\Delta(\bm \phi || \mathbf q) \cr
                                                 &= \mathcal H (\mathcal M) - H_{\Delta} (\mathcal M| \Omega) + 
                                                       D_\Delta(\bm \phi || \mathbf q) \label{eq: PhiDecomp} \cr
									&\geq \mathcal I (\mathcal M; \Omega),
\end{align}
with equality if and only if $\bm \phi = \mathbf q$.

\end{itemize}

\cite{Weinberger24} makes the case that it is often reasonable to assign a numerical {\it value function},
$V: \mathcal M \to \Re$ such that $\mathcal M = \mathcal D_\Delta \cup \mathcal I_\Delta \cup 
\mathcal U_\Delta$,\footnote{See \cite{Weinberger24} for a more complete discussion of this decomposition} where: 
\begin{itemize}
\item $\mathcal D_\Delta$, the set of pragmatically disinformative messages with respect to $\Delta$, is defined by 
$$
\mathcal D_\Delta = \{m \in \mathcal M: V(m) < 0\},
$$
\item $\mathcal I_\Delta$, the set of pragmatically irrelevant messages with respect to $\Delta$, is defined by 
$$
\mathcal I_\Delta = \{m \in \mathcal M: V(m) = 0\},
$$
\item $\mathcal U_\Delta$the,  set of pragmatically useful messages with respect to $\Delta$, is defined by 
$$
\mathcal U_\Delta = \{m \in \mathcal M: V(m) > 0\}.
$$
\end{itemize}
Intuitively, we think of the processing of messages in $\mathcal D_\Delta$, $\mathcal I_\Delta$, and $ \mathcal U_\Delta$ as leaving $\Delta$ respectively worse off, the same, or better off than before.
\cite{Weinberger24} defines $\mathcal N_\Delta = \mathcal I_\Delta \cup \mathcal U_\Delta$ as the set 
of messages that are ``pragmatic noise''.  Evidently, 
$$
\Phi_{\Delta} ({\mathcal M}; \Omega) = \Phi_{\Delta} (\mathcal 	D_\Delta; \Omega)  
												+ \Phi_{\Delta} (\mathcal I_\Delta; \Omega) 
                                                                 + \Phi_{\Delta} (\mathcal U_\Delta; \Omega)
$$
Thus, if $\Phi_{\Delta} (\mathcal M; \Omega)$ is zero or negligibly small, so must be $\Phi_{\Delta} (\mathcal U_\Delta; 
\Omega)$, a fact that will play an important role in the sequel.

\section{Partitions of $\mathcal M$}
We anticipate that $|\mathcal M| >> |\Omega|$ in many applications of \eqref{eq: BasicDef}.  In such cases, $\Delta$ must respond identically to 
many of the messages in $\mathcal M$, thus partitioning $\mathcal M$ into a set, $\Pi$, of equivalence classes.  We then have the following 
\begin{theorem} 
Even though $\Phi_{\Delta}(\mathcal M; \Omega)$ is bounded by \eqref{eq: MaxPragInfo}, the Shannon entropy, $\mathcal H(\mathcal M)$, can be  arbitrarily large.
\end{theorem}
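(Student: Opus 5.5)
The plan is a direct construction. For a fixed prior $\mathbf q$ on the finite set $\Omega$, I will exhibit a family of message ensembles $\mathcal M_n$ whose Shannon entropy $\mathcal H(\mathcal M_n)$ grows without bound. Throughout the family, $\Phi_\Delta(\mathcal M_n;\Omega)$ stays below the fixed bound \eqref{eq: MaxPragInfo}. The essential observation is the one made just before the statement. When $|\mathcal M| \gg |\Omega|$, the receiver $\Delta$ must map many messages to the same posterior $\mathbf p_m$. So $\mathcal M$ is partitioned into equivalence classes $\pi \in \Pi$, each carrying a common posterior $\mathbf p_\pi$, and refining the messages inside a class adds entropy but no pragmatic information.

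The first step is to bound $\Phi_\Delta$ uniformly in $\mathcal M$. From \eqref{eq: MaxPragInfo}, $-\log q_{k(m)} \le -\log \min_l q_l$ for every $m$. Hence
\begin{align*}
\Phi_\Delta(\mathcal M;\Omega) \le -\sum_m \varphi_m \log q_{k(m)} \le -\log \min_{1\le l\le N} q_l ,
\end{align*}
which is finite because all $q_l>0$ and $N$ is finite. This bound does not depend on $\mathcal M$, $|\mathcal M|$, or $\varphi$.

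The second step is to group the sum \eqref{eq: BasicDef} by class. Since $\mathbf p_m=\mathbf p_\pi$ for all $m\in\pi$, we get $\Phi_\Delta(\mathcal M;\Omega)=\sum_{\pi}\varphi_\pi D(\mathbf p_\pi\|\mathbf q)$, where $\varphi_\pi=\sum_{m\in\pi}\varphi_m$. This depends only on the class weights $\varphi_\pi$, not on how the weight is spread inside each class.

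For the construction, take $\Pi$ to be the $N$ classes for which $\Delta$ is pragmatically definitive, with $\mathbf p_\pi=\mathbf u_\pi$ and class weights $\varphi_\pi=q_\pi$ (any fixed choice works). Then fill each class with $n$ equiprobable messages of probability $q_\pi/n$. By the grouping rule for entropy,
\begin{align*}
\mathcal H(\mathcal M_n)=\mathcal H(\Pi)+\sum_\pi \varphi_\pi \mathcal H(\mathcal M_n\mid \pi)=\mathcal H(\mathbf q)+\log n \to \infty .
\end{align*}
Meanwhile $\Phi_\Delta(\mathcal M_n;\Omega)=-\sum_\pi q_\pi\log q_\pi=\mathcal H(\mathbf q)$ stays fixed, within the bound above. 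Consistently with \eqref{eq: PhiDecomp}, the mutual information $\mathcal I(\mathcal M_n;\Omega)$ also stays bounded by $\log N$. The growth in $\mathcal H(\mathcal M_n)$ is entirely absorbed by $H_\Delta(\mathcal M_n\mid\Omega)$.

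The only real obstacle is justifying that such an ensemble is admissible. That is, $\Delta$ (via $\mathfrak A_\Delta$) must respond identically to all messages in a class, however large $n$ is. I will handle this by noting that the definition places no restriction on $|\mathcal M|$. For instance, messages can be arbitrary-length strings whose first symbol determines the class and whose remaining symbols the automaton ignores, so identical responses are trivially realizable.
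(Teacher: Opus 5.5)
Your proposal is correct and uses essentially the same idea as the paper. Both put (nearly) uniform probability on a growing number of messages that $\Delta$ lumps into a fixed, finite set of responses, so $\mathcal H(\mathcal M)$ grows like the logarithm of the number of messages while $\Phi_\Delta(\mathcal M;\Omega)$ stays capped by \eqref{eq: MaxPragInfo}, hence by $-\log q_{\min}$.

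The differences are only in execution. The paper uses a concrete two-outcome classifier on all strings over $\{\mathtt P,\mathtt Q,\mathtt R,\mathtt{END}\}$, with weight $(1-\epsilon)/T$ on strings shorter than $L$ plus a geometric tail, and only lower-bounds the entropy by $(1-\epsilon)\log[T/(1-\epsilon)]$. Your $N$-class, $n$-messages-per-class construction instead gives the exact values $\mathcal H(\mathcal M_n)=\mathcal H(\mathbf q)+\log n$ and $\Phi_\Delta(\mathcal M_n;\Omega)=\mathcal H(\mathbf q)$ via the grouping rule.
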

\begin{proof}
Suppose, 
for example, $\mathcal A$ is the alphabet consisting of the symbols $\mathtt P$, $\mathtt Q$, $\mathtt R$, and the special ``end of string'' symbol, $\mathtt{END}$, and suppose 
${\mathcal M} = \mathcal A^*$, the set of  all strings consisting of these symbols.  Suppose further that $\Delta$ 
outputs $\omega_0$ if $m$ consists entirely of $\mathtt P$'s or $\mathtt Q$'s, except for a terminal $\mathtt{END}$ but are 
otherwise arbitrary elements of $\mathcal M$, and outputs $\omega_1$ otherwise.  
For arbitrarily small $\epsilon > 0$ and an arbitrarily large integer $L$, assign probabilities to all messages $m \in \mathcal M$ as
$$
\varphi_m = \begin{cases} (1-\epsilon)/T, & \quad \text{if $|m| < L$} \\
                                        2 \times 3^{L - 1 - 2|m|}\epsilon,& \quad \text{otherwise}
                   \end{cases}
$$
where 
$$
T = \sum_{k=0}^{L-1} 3^k,
$$
is the number of strings in $\mathcal M^*$ of length less than $L$.  We then compute
\begin{align*}
\mathcal H(\mathcal M) &= - \sum_{m \in \mathcal M} \varphi_m \log \varphi_m   \\
                                   &= - \sum_{|m| < L} \varphi_m \log \varphi_m - \sum_{|m| \geq L} \varphi_m \log \varphi_m   \\
                                   &> (1-\epsilon) \log\left[\frac{T}{1-\epsilon}\right].
\end{align*}
There are $T$ of the $\varphi_m$'s for $|m| < L$, all of which have the identical value 
$(1-\epsilon)/T$, so the expression in the third line above is the value of the first sum in the line above it.  The third line follows from the second because ignoring the second sum in the second line decreases the value of the result.  
Since $L$ and thus $T$ can be made arbitrarily large, so can $\mathcal H(\mathcal M)$.\\
\end{proof}

Evidently, $\mathbf p_m= \mathbf p_{m^*}$ for all  $m, m^*$ in the same equivalence class $\pi \in \Pi$.  If we denote this vector of probabilities that is common to all $m \in \pi$ as $\mathbf p_\pi = (p_{1|\pi}, p_{2|\pi}, \ldots p_{N|\pi})$ 
and the sum $\sum_{m \in \pi} \varphi_m$ as $\varphi_\pi$, then 
\begin{align*}
\Phi_\Delta (\mathcal M; \Omega) &= \sum_{i, m} \varphi_m  p_{i|m} \log \left(\frac{p_{i|m}}{q_i}\right)  \\
                                               &= \sum_i \sum_{\pi \in \Pi} \sum_{m \in \pi} \varphi_m  p_{i|m} \log \left(\frac{p_{i|m}}{q_i}\right)  \\
                                               &= \sum_i \sum_{\pi \in \Pi} \varphi_\pi  p_{i|\pi} \log \left(\frac{p_{i|\pi}}{q_i}\right). 
\end{align*}

A different decision maker, $\Delta'$, could partition $\mathcal M$ differently, perhaps increasing  the pragmatic information that this decision maker 
could extract.  However, \eqref{eq: MaxPragInfo} suggests that the maximum pragmatic information 
extracted by any such $\Delta'$ has the following upper bound:
\begin{align}
\Phi_{\Delta^*} (\mathcal M; \Omega) \leq -\sum_\pi \varphi_\pi \log q_{k(\pi)}   \label{eq: MaxPartitionPragInfoInfo},
\end{align}
where $\Delta^*$ assigns the messages in each partition $\pi$ deterministically to 
the outcome $k(\pi) = \argmax (-\log q_l)$, with the max taken over all of the $q_l$'s for which $p_{l|m} > 0$.  \\

A finer partition $\mathcal M$ creates a larger set of equivalence classes, $\Pi'$, resulting in estimates of a random variable 
$\omega'$ that would take values in a superset, $\Omega'$ of $\Omega$, with $\Delta'$ replacing $\Delta$ as the decision
maker.  In each partition $\pi' \in \Pi'$, the {\it a posteriori} probabilities, $\mathbf q$ and $\mathbf p_m$, would then be 
replaced, for each $m \in \mathcal M$, 
by the possibly larger sets of probabilities $\mathbf q'$ and $\{p'_{j|m}\}'$, respectively.  
We then have the following
\begin{theorem} 
For all such refinements,
$$
\Phi_{\Delta'}(\mathcal M; \Omega') \geq \Phi_{\Delta}(\mathcal M; \Omega),  
$$
with equality only if the ratios $p_{j|m}/q_j$ are constant across $j \in J(i)$ for all $i$.
\end{theorem}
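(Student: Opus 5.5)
The plan is to make the refinement explicit as a coarse-graining of outcomes and then apply the log-sum inequality block by block. I will formalize ``refinement'' as follows. Each outcome $\omega_i \in \Omega$ corresponds to a nonempty block $J(i) \subseteq \Omega'$ of outcomes of $\omega'$, and the blocks $J(i)$ partition $\Omega'$. The coarse probabilities are the marginals of the fine ones:
$$
q_i = \sum_{j \in J(i)} q'_j, \qquad p_{i|m} = \sum_{j \in J(i)} p'_{j|m} \quad \text{for all } i \text{ and all } m \in \mathcal M .
$$
The first step is to state this compatibility assumption explicitly, since the statement of the theorem (``estimates of a random variable $\omega'$ that would take values in a superset'') leaves it implicit. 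I regard this step as the main obstacle: everything afterwards is a standard convexity argument, but only once the relation between $(\mathbf q, \mathbf p_m)$ and $(\mathbf q', \mathbf p'_m)$ is pinned down as summation over the blocks $J(i)$. With this reading the equality condition ``the ratios $p_{j|m}/q_j$ are constant across $j \in J(i)$'' is meaningful, with the primed quantities understood there.

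Next I write both quantities using \eqref{eq: BasicDef}. For $\Phi_{\Delta'}(\mathcal M;\Omega')$ I group the inner sum over $j \in \Omega'$ block by block:
$$
\Phi_{\Delta'}(\mathcal M;\Omega') = \sum_m \varphi_m \sum_i \sum_{j\in J(i)} p'_{j|m}\log\frac{p'_{j|m}}{q'_j}.
$$
For fixed $m$ and $i$, the log-sum inequality applied to the nonnegative numbers $\{p'_{j|m}\}_{j\in J(i)}$ and $\{q'_j\}_{j\in J(i)}$ gives
$$
\sum_{j\in J(i)} p'_{j|m}\log\frac{p'_{j|m}}{q'_j} \;\geq\; \Bigl(\sum_{j\in J(i)} p'_{j|m}\Bigr)\log\frac{\sum_{j\in J(i)} p'_{j|m}}{\sum_{j\in J(i)} q'_j} = p_{i|m}\log\frac{p_{i|m}}{q_i}.
$$
Zero terms are handled by the paper's convention $0\log 0 = 0$, and the $q'_j$ are positive by assumption. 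Summing over $i$ and averaging over $m$ with the nonnegative weights $\varphi_m$ yields $\Phi_{\Delta'}(\mathcal M;\Omega') \geq \Phi_\Delta(\mathcal M;\Omega)$.

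Equivalently, one can use the chain rule for relative entropy. The difference $\Phi_{\Delta'}-\Phi_\Delta$ equals
$$
\sum_m\varphi_m\sum_i p_{i|m}\, D\bigl(p'_{\cdot|m,i}\,\|\,q'_{\cdot|i}\bigr),
$$
where $p'_{j|m,i} = p'_{j|m}/p_{i|m}$ and $q'_{j|i} = q'_j/q_i$ are the conditional distributions within block $J(i)$. This form makes the nonnegativity transparent.

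For the equality clause, I use the equality case of the log-sum inequality: equality holds in a given block exactly when $p'_{j|m}/q'_j$ is constant over $j\in J(i)$ (equivalently, when the relevant divergence term above vanishes). Equality in the total therefore forces this for every $i$ and every $m$ with $\varphi_m>0$ and $p_{i|m}>0$, which is the claimed ``only if''.
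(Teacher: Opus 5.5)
Your proposal is correct and follows essentially the same route as the paper: write $q_i$ and $p_{i|m}$ as sums of the refined $q'_j$ and $p'_{j|m}$ over the blocks $J(i)$, apply the log-sum inequality block by block, average over $m$ with the weights $\varphi_m$, and read the equality condition off the equality case of the log-sum inequality. Your version is slightly cleaner than the paper's, which drops some primes (writing $q_j$ for $q'_j$), and restricting the equality condition to messages with $\varphi_m>0$ is the accurate form of the ``only if'' clause.
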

\begin{proof}
If $\Delta'$ induces a finer partition than $\Delta$, then, for each $m$ and for each $i$,  there must be some subset, $J(i)$, of the $j$'s such that
$$
p_{i|m} = \sum_{j \in J(i)} p'_{j|m} \mathrm{\ \ and\ \ } q_i = \sum_{j \in J(i)} q'_j
$$
We must therefore have, for all $i$ and $m$, that
$$
 \sum_{j \in J(i)} p'_{j|m} \log \left( \frac{p'_{j|m}}{q_j} \right) \geq
p_{i|m} \log \left( \frac{p_{i|m}}{q_i} \right),
$$
with equality only if the ratios $\frac{p'_{j|m}}{q_j}$ are constant across the $j$'s 
by the log-sum inequality.  Thus, for any choice of the marginal probabilities $\varphi_m$,
\begin{align*}
\Phi_{\Delta'} ({\mathcal M}; \Omega') &= \sum_{i,m} \sum_{j \in J(i)} \varphi_m p_{j|m} \log \left( \frac{p_{j|m}}{q_j} \right) \cr
                                                        &\geq \sum_{i,m} \varphi_m p_{i|m} \log \left( \frac{p_{i|m}}{q_i} \right) \cr
                                                        &=  \Phi_{\Delta} ({\mathcal M}; \Omega),
\end{align*}
with equality only if the ratios $\frac{p'_{j|m}}{q_j}$ are constant across the $j$'s for each $J(i)$.
\end{proof}
Nevertheless, merely increasing the size of $\Omega$ does not necessarily increase 
$\Phi_{\Delta} ({\mathcal M}; \Omega)$ if the hypotheses of the theorem are violated, as 
is shown by the 
following example:  Suppose we have a single $m$ and 
two outputs $\omega_1$ and $\omega_2$, with $q_1 = q_2 = 1/2$, $p_{1| m} = 1/4$, and $p_{2| m} = 3/4$, so that $\Phi_{\Delta'} ({\mathcal M}; \Omega) \approx 0.1887$.  We then introduce a third output, $\omega_3$ with $p_{3| m} =  q_3 = 0.01$, and make the adjustments $q_1 = q_2 = {{1 - 0.01} \over 2}$, $p_{1| m} = {{1 - 0.01} \over 4}$, and $p_{2| m} = {{3(1 - 0.01)} \over 4}$.  $\Phi_{\Delta} ({\mathcal M}; \Omega)$ then assumes a smaller value of approximately 0.1868.  
In other words, the addition of the new output $\omega_3$ here simply confuses things.\\

Such partitions can be viewed as a higher level description of the input message ensemble.  Indeed, \cite{CrutchfieldID} 
bases its discussion about ``free information'' on a similar partitioning, albeit in the special case that the 
decision  to be made is whether, in our notation, one subsequence of $m$, presented as a time series, 
is identical to another that appears later.  \cite{CrutchfieldID} 
presents its ideas in the language of symmetries, which we, too, can adopt by identifying the $m$'s in the same partition as
``symmetrical''.  However, the present work 
makes clear what \cite{CrutchfieldID} leaves out; namely, the question of how those symmetries arise in the first place.  Our 
answer is clear:  messages are symmetrical because our decision maker, $\Delta$, has explicitly said so.  
It is only in the context of 
the need to make a decision, if only in some generalized sense, that a message has meaning, and it is only in such a context 
that two messages can have the same meaning.\\
 
In  \eqref{eq: PhiDecomp}, we added the 
subscript $\Delta$ to the conditional Shannon entropy to emphasize that the conditioning depends on the 
decision maker.  We now observe that $\mathcal H (\mathcal M)$ remains the same regardless of how $\mathcal M$ is partitioned.  
Thus, if $\Delta'$ induces the finer partition, $\Pi'$, the increase in pragmatic information can be written
\begin{align*}
\Phi_{\Delta'}({\mathcal M}; \Omega) - \Phi_{\Delta}({\mathcal M}; \Omega) &= 
                                                           \Bigg[\mathcal H_{\Delta} (\mathcal M| \Omega) - \mathcal H_{\Delta'} (\mathcal M| \Omega) \Bigg]
                                                           + \Bigg[D_\Delta(\bm \phi' || \mathbf q) - D_\Delta(\bm \phi || \mathbf q)\Bigg]                        \\
                                                           &\geq 0.
\end{align*}
The quantity in the first set of square brackets represents the improvement in mapping the different messages $m \in \mathcal M$ to the appropriate
output $\omega_i \in \Omega$; the quantity in the second set of brackets represents the improvement in correcting the initial bias in $\mathbf q$.\\

The extra effort required for the finer partition may not be worth it.  This will be the case, for example, if 
the receipt of particular $\omega$'s can almost always be traced back to messages $m$ in some strict subset of a partition, in which 
case little is learned by making the partition finer.  A version of rate distortion theory\footnote{See, for example, \cite{CoverAndThomas}
for a discussion of this important branch of standard information theory} might therefore be based on pragmatic information.\\

\section{The Effect of the Decision Maker's Computational Capacity on Pragmatic Information}
As \cite{wow} noted, if $\Delta$ has {\it no} computational abilities at all, in the sense that any message
received will be ignored, $\mathbf p_m = \mathbf q$ for all messages $m$, and $\Phi_{\Delta} ({\mathcal M}; \Omega) =0$, regardless of the length or complexity of $m$!  
This observation raises the question of just how the computational power of the receiver influences its ability to extract pragmatic information from its input messages.  A rigorous discussion of this question 
almost inevitably draws upon much of theoretical computer science, as we will see when we demonstrate the following:
\begin{itemize}
\item If $\Delta$ can only process messages at the lowest level of the Chomsky hierarchy, there is a finite upper bound on $\Phi_{\Delta}({\mathcal M}; \Omega)$ that depends only on the 
{\it a priori} assumptions of $\Delta$, regardless of the length or complexity of $m$.
\item There can be a qualitative increase in $\Phi_{\Delta}({\mathcal M}; \Omega)$ as $\Delta$ becomes able to process 
the successively more complex messages comprising the successively higher levels of the Chomsky hierarchy.
\item The more complex messages of higher levels of the Chomsky hierarchy can appear essentially random to a processor corresponding 
to a lower level of the hierarchy, even though such messages are not, in fact, random  In such cases, no 
pragmatic information can be extracted by such a processor, even though pragmatic information can be 
extracted when a processor of sufficient power is supplied.
\item More computational power can increase {\it pragmatically useful} information extraction without increasing pragmatic information extraction.
\item The computation of $\Phi_{\Delta}({\mathcal M}; \Omega)$ can be undecidable if the messages in $\mathcal M$ are
in the highest level of the Chomsky hierarchy.
\end{itemize}

\subsection{Some Preliminaries, Notational and Otherwise}
We begin by assuming that each input message, $m$,  is a string of symbols, $\alpha_1, \alpha_2, \ldots $, of arbitrary length,
with each $\alpha_k$ chosen from some finite ``alphabet'' $\mathcal A$.  We
denote the number of distinct symbols in $\mathcal A$ as $|\mathcal A|$ and
the set of strings of zero or more of the $\alpha$'s as $\mathcal A^*$, which therefore includes the empty string, 
$\Lambda$.  A {\it formal language} $\mathcal L$ is a subset of $\mathcal A^*$, as described in 
Appendix A.  The Chomsky hierarchy is comprised of four classes of such languages, each qualitatively 
more complex than the classes beneath it in the hierarchy. (See Appendix A or theoretical computer 
science textbooks such as \cite{Sipser} or \cite{AHU} for an introduction to this hierarchy.)
\\

It will be convenient to use the notation for substrings that will be familiar to many readers, namely that of the Python 
programming language, summarized as follows:
\begin{itemize} 
\item $m[i]$ is the $i^{\rm th}$ symbol in the string $m$, with $m[0]$ being the first symbol,  
\item $m[{i:\ }j]$ the $i^{\rm th}$ thru $j-1^{\rm th}$ symbols in the string, $m[{\ :}i]$ the beginning of the string, up to but not 
including the $i^{\rm th}$ symbol, and $m[i{:\ }]$ the end of the string, starting with the $i^{\rm th}$ symbol.
\item the notation $m_1 + m_2$ for strings $m_1$ and $m_2$ is the concatenation $m_1 m_2$.
\end{itemize}
We extend the above notation by replacing the Python construct $\mathtt{ m_1.startswith(m_2)}$ by
$m_2 \sqsubset m_1$ if $m_2$ is a proper prefix of $m_1$ and $m_2 \sqsubseteq m_1$ to include the case where $m_1=m_2$.  A {\it prefix free set} is a set of strings for which no string is a prefix of any of the others.\\

We will sometimes need to consider the limiting case where $m$ becomes the semi-infinite sequence, $S$.  
In contrast to the statement $m \in \mathcal A^*$,
we write $S \in \mathcal A^\infty$.  We use the same substring conventions for sequences as we use for strings, with $S[0]$ being well-defined 
because $S$ is only semi-infinite.  We then define the pragmatic information, ${\Phi_\Delta (\mathcal S; \Omega)}$, for 
$S \in \mathcal S \subset A^\infty$, as $\lim_{n \to \infty} {\Phi_\Delta (\mathcal M_n; \Omega)}$,
where $\mathcal M_n = \left\{S[\ :n] : S \in \mathcal S \right\}$, when the limit exists.
\\

Algorithms that can determine whether any given string in $\mathcal A^*$ is a member of a given language $\mathcal L$
are often called abstract automata.  Such automata are discussed at length in Appendix B, below and in numerous textbooks, such as 
\cite{Sipser} and \cite{AHU}.  When an automaton can always correctly determine that string $m \in \mathcal L$,
the automaton is said to {\it recognize} $\mathcal L$ or alternatively, that the automaton is said to 
{\it accept} $m \in \mathcal L$.  A fundamental result of theoretical computer science is the collection of theorems, shown in Table 1 below and covered
by \cite{Sipser} and \cite{AHU}, which specify precisely which abstract automata can recognize which languages.\\

\begin{table}[h]
\begin{center}
\caption{The Chomsky Hierarchy and Corresponding Automata}
\begin{tabular}{|l|p{56 mm}|c|}
\hline\hline
{\bf Language} & {\bf Recognizing Automaton} & {\bf Our Automaton Symbol} \\
\hline\hline
Regular    & Finite State Machine     & $\mathfrak F$  \\
\hline
Context Free & Non-Deterministic Pushdown Automaton & $\mathfrak P$  \\
\hline
Context Sensitive & Linear Bounded Automaton & $\mathfrak B$ \\
\hline
Recursively .Enumerable & Turing Machine & $\mathfrak T$ \\
\hline
\end{tabular}
\end{center}
\end{table}

Our present interest is in a generalization of the membership problem described above; 
namely, given $m \in \mathcal M$, determine
which, if any, of the languages $\mathcal L_1, \mathcal L_2, \ldots, \mathcal L_N$ (one for each output of $\Delta$) does 
$m$ belong.  We consider only the simplest situation in which $m$ is a member of at most one of the $\mathcal L$'s.  
We also assume that all of the $\mathcal L$'s are at the same level in the Chomsky hierarchy.  Because all such languages
are closed under finite set theoretic unions, this union is also at the same level in the hierarchy as its constituents.\\

Given the message $m$, $\mathfrak A_\Delta$ first checks to see if $m \in \mathcal L_1$.  If so, $\mathfrak A_\Delta$ 
outputs the corresponding, unique
\footnote{Multiple final states of $\mathfrak A_{\Delta}$ can map to the same $\mathbf p_m$, however.}
$\mathbf p_m$.  If not,$\mathfrak A_\Delta$ checks to see if $m \in \mathcal L_2$, and, if so, 
$\mathfrak A_\Delta$ once again outputs a corresponding, unique $\mathbf p_m$.  
If $m$ is not recognized by any of the $\mathcal L$'s, 
$\Delta$ retains the prior probabilties $\mathbf q$, so that $m$ has zero pragmatic information\footnote{That unrecognized messages have zero pragmatic information corresponds to 
the ``limit of complete novelty'', postulated by \cite{Weizaeker}.}.  \\

\subsection{Limitations on $\Phi_{\Delta} (\mathcal M; \Omega)$ If $\mathfrak A_\Delta$ is a Finite State Machine}
The simplest abstract automaton in the Chomsky hierarchy (See Table 1, above.), the finite state machine, partitions its inputs into a finite number of equivalence classes.  Hence, if $\Delta$'s parsing automaton,
$\mathfrak A_\Delta$, is a finite state machine, it must induce equivalence classes on the elements of $\mathcal M$.   In this case, the 
Myhill-Nerode Theorem \cite{Sipser} guarantees that the number, $|\Pi|$, of the partitions, $\Pi$, of $\mathcal M$ described above is the number of states in $\mathfrak F$, independent of the size of $\Omega$, or, 
as we saw above, the length of the messages in $\mathcal M$.  In general, for a given $\pi$, the probabilities $p_{i|\pi}$ 
and $p_{j|\pi}$ will both be positive for $i \neq j$, reflecting the possibility that, even after $\Delta$ receives some message $m \in \pi$, there is uncertaintly as to the value of $\omega$, and thus $\Delta$'s decision.  We 
therefore have
\begin{align*}
\Phi_{\Delta} (\mathcal M; \Omega) &=    \sum_{\pi \in \Pi} \varphi_\pi \left[\sum_i   p_{i|\pi} \log \left(\frac{p_{i|\pi}}{q_i}\right)\right]. \\
\end{align*}
A straightforward application of the corollary to Theorem 2.3.4 of \cite{Weinberger24} shows that, for each $\pi \in \Pi$, the quantity in square brackets, which is $D(\{p_{i|\pi}\}, \mathbf q)$, the Kullback-Leibler 
divergence of the $p_{i|\pi}$'s and $\mathbf q$ is bounded
by $ - \log q_{k(\pi)}$, where, for each $m$, $k(\pi) = \argmax (-\log q_l)$, with the max taken over all of the $q$'s for which $p_{l|\pi} > 0$.  We have thus proven
\begin{theorem}  
\begin{align*}
\Phi_{\Delta} (\mathcal M; \Omega) &\leq  -\sum_{\pi \in \Pi} \varphi_\pi \log q_{k(\pi)} \\
                                                    &\leq   - \log q_{\rm min},
\end{align*}
where $q_{\rm min}$ is the smallest component of $\mathbf q$, regardless of the magnitude of $|m|$.
\end{theorem}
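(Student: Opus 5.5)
The plan is to work class by class. First I rewrite $\Phi_\Delta(\mathcal M;\Omega)$ over the partition $\Pi$ induced by the finite state machine $\mathfrak F$, then bound each class's contribution by a quantity that depends only on $\mathbf q$. The rewriting was already done in the partition section: since $\mathbf p_m=\mathbf p_\pi$ for every $m\in\pi$, we have $\Phi_\Delta(\mathcal M;\Omega)=\sum_{\pi\in\Pi}\varphi_\pi D(\mathbf p_\pi\|\mathbf q)$. Here $\varphi_\pi=\sum_{m\in\pi}\varphi_m$, these weights are nonnegative, and they sum to one. By Myhill--Nerode, $\Pi$ is finite, with at most as many classes as $\mathfrak F$ has states. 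Messages recognized by none of the $\mathcal L_i$ get $\mathbf p_m=\mathbf q$ and contribute zero, so they can be lumped into one extra class without affecting the bound.

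The key step is the per-class estimate $D(\mathbf p_\pi\|\mathbf q)\le-\log q_{k(\pi)}$. Let $S_\pi=\{l: p_{l|\pi}>0\}$. Then
\begin{align*}
D(\mathbf p_\pi\|\mathbf q)=\sum_{l\in S_\pi}p_{l|\pi}\log p_{l|\pi}-\sum_{l\in S_\pi}p_{l|\pi}\log q_l .
\end{align*}
The first sum is $-\mathcal H(\mathbf p_\pi)\le 0$. The second term, $-\sum_{l\in S_\pi}p_{l|\pi}\log q_l$, is a convex combination of the numbers $-\log q_l$ for $l\in S_\pi$, so it is at most their maximum, which is $-\log q_{k(\pi)}$ by definition of $k(\pi)$. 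This is exactly the content of the corollary to Theorem 2.3.4 of \cite{Weinberger24} invoked above, and it could be cited directly. Equality holds precisely when $\mathbf p_\pi$ is the unit vector at $k(\pi)$, that is, when the class is pragmatically definitive, which matches \eqref{eq: MaxPragInfo}.

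Summing against the weights $\varphi_\pi$ gives the first inequality. For the second inequality, I use $-\log q_{k(\pi)}\le-\log q_{\min}$ for every $\pi$ together with $\sum_\pi\varphi_\pi=1$.

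Finally, I note that neither bound involves $|m|$. The only features of $\mathcal M$ that enter are the class weights $\varphi_\pi$, which form a probability vector, and the finite set of posteriors $\mathbf p_\pi$, of which there are at most as many as there are states of $\mathfrak F$. So arbitrarily long or high-entropy messages, as in the first theorem of the partition section, cannot push $\Phi_\Delta$ past $-\log q_{\min}$.

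I expect no real obstacle. The only delicate point is bookkeeping. The $0\log 0$ convention must be used so that $k(\pi)$ is taken over $S_\pi$, and the unrecognized messages must be handled so that the weights still sum to one. The second inequality actually holds even without finiteness of $\Pi$; finiteness matters only for the interpretation that the bound is uniform in message length.
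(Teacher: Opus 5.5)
Your proposal is correct and follows the paper's argument. You rewrite $\Phi_\Delta$ as $\sum_\pi \varphi_\pi D(\mathbf p_\pi\|\mathbf q)$, bound each class term by $-\log q_{k(\pi)}$, and use $\sum_\pi \varphi_\pi = 1$ to reach $-\log q_{\min}$. The only difference is that the paper cites the corollary to Theorem 2.3.4 of \cite{Weinberger24} for the per-class bound, whereas you prove it directly: you split the divergence into $-\mathcal H(\mathbf p_\pi)\le 0$ plus a convex combination of the $-\log q_l$, $l\in S_\pi$.
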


\subsection{An Intuitive Explanation of Why Increasing Computational Power Can Increase Possible Pragmatic Information 
Extraction}
Suppose the grammar describing $\mathcal M$ is one of the grammars in the Chomsky hierarchy and a particular $\Delta$
is able to glean pragmatically useful information from messages in $\mathcal M$ because the corresponding $\mathfrak A_\Delta$
has the appropriate level of computational power.  What if some other decision maker $\Delta'$ tries to process messages from
the same $\mathcal M$, but with a strictly less powerful $\mathfrak A_{\Delta'}$ (So that $\mathcal M$ is a context free language, but $\mathfrak A_{\Delta'}$ is a finite state machine, for example.)? \\

Except for regular grammars, words in every grammar in the Chomsky hierarchy require an automaton with at least one stack to parse them correctly, so the attempt to recognize $m$'s that are messages in a
context free, context sensitive, or unrestricted grammar using a finite state automaton will fail for at least one such $m$.  Similarly, the attempt to recognize $m$'s that are messages in a context sensitive or unresticted grammar using a pushdown automaton will also fail because the pushdown automaton is missing a second stack.  
Finally, if $\mathcal M$ is a subset of an unrestricted grammar that is not context sensitive,
then there must exist $m \in \mathcal M$ whose correct parsing requires a stack that grows super-linearly with $|m|$.  It follows that such parsings will also fail if $\mathfrak A_{\Delta'}$ is a bounded linear automaton.\\

What happens when a given $m$, which, for convenience we label $m^0$, cannot be parsed correctly?  
If $\Delta$ can detect the parsing failure, then $\mathfrak A_{\Delta'}$ does not change its prior estimate of the state-of-the-
world random variable $\omega$, and $\mathbf p_{m^0} = \mathbf q$.  It follows that the terms 
for which $m=m^0$ in the formula for $\Phi_{\Delta'} (\mathcal M; \Omega)$ are zero.  If, in addition, the processing of 
those $m$'s that do not require the full power of $\mathfrak A_\Delta$ remains unchanged, the corresponding terms in 
$\Phi_{\Delta'} (\mathcal M; \Omega) $ also remain unchanged.  We conclude that 
$$
\Phi_{\Delta'} (\mathcal M; \Omega) < \Phi_{\Delta} ({\mathcal M}; \Omega),
$$
where $\Delta$ is another decision maker that uses an automaton that appears in the same row of the table as $\mathcal L$.

\subsection{A More Rigorous Explanation of Why Increasing Computational Power Can Increase Possible Pragmatic Information 
Extraction}
In this section, we demonstrate that there can be qualitative differences in the abilities of automata at the different levels of the 
Chomsky hierarchy to predict the symbol $S[n]$ of a given sequence, $S$, given its predecessor symbols.  We begin by 
following \cite{CoverAndThomas}. There, the successive symbols $S[n] \in
\mathcal A$ are interpreted as the identifiers of the winners of successive iterations of idealized horse races.  Prior to the 
$n^{\rm th}$ race, we take these winners to be random variables.  We then 
use the results of \cite{CoverAndThomas} to establish the connection to pragmatic information.
\\

To proceed, we make the following assumptions for all races:
\begin{itemize}
\item we place bets on one or more of the $|\mathcal A|$ horses winning a given race, 
\item the payout of each such bet is $\mathcal O_h$ dollars per dollar bet if horse $h$ wins the race and nothing otherwise,
\item $P_h$ is the probability that horse $h$ wins the race, possibly conditioned on the outcome of previous races, 
\item we bet the fraction, $B_h$, of our wealth on horse $h$, 
\item we bet all of our wealth on each race (so $\sum_h B_h = 1$) and,
\item we re-invest all of our winnings, beginning with initial wealth $w_0$.  Because $\sum_h B_h P_h > 1$ for all $n$, 
a successful sequence of bets will cause our wealth, $w_n$, to grow exponentially with $n$. 
\end{itemize}

$$
w_n = \prod^t w_t = w_0 \prod_t \left\{\sum_h {\mathbf 1}_{ht} B_h \mathcal O_h \right\},
$$
where ${\mathbf 1}_{ht}$ is the random variable that is one if horse $h$ wins the $t^{\rm th}$ race and zero otherwise.  We 
then have 
$$
\log_{|\mathcal A|} w_n = \log_{|\mathcal A|} w_0 + \sum_t \sum_h {\mathbf 1}_{ht} \log_{|\mathcal A|} B_h \mathcal O_h.
$$
For $\mathbf B = (B_1, B_2, \ldots B_{|\mathcal A|})$ and $\mathbf P = (P_1, P_2, \ldots P_{|\mathcal A|})$, define the expected growth rate 
$G(\mathbf B, \mathbf P)$ as
\begin{align*}
G(\mathbf B, \mathbf P) &= {\large E} \left[\sum_h {\mathbf 1}_{ht} \log_{|\mathcal A|} B_h \mathcal O_h\right]   \\
                                   &= \sum_h P_h \log_{|\mathcal A|} B_h \mathcal O_h   \\
\end{align*}
\cite{CoverAndThomas} shows that the optimal betting allocation is to choose $\mathbf B = \mathbf P$, 
even if the $P$'s are conditioned on the outcomes of the previous races.  It then shows that the increase in the growth 
rate due to the conditioning if $S$ is ergodic is the mutual information 
between the previous outcomes and the outcome of the current race\footnote{\cite{CoverAndThomas} also notes that the mutual information, and
thus the pragmatic information  
between the previous and current outcomes is an upper bound on this increase if the all-or-nothing outcome of the horse race is replaced by a 
vector of positive random variables representing daily price fluctuations.}.  Because the prior distribution in the pragmatic information formula is the
marginal distribution for $S[n]$, the mutual information between $S[{\ :}n]$ and $S[n]$ is the pragmatic information as well.\\

What \cite{CoverAndThomas} does not discuss is how the conditional distribution of $S[n]$, given $S[{\ :}n]$,  is determined.  Since this determination is, 
in fact, a prediction, we consider this problem from the point of view of predicting the continuation of an individual sequence.
We might expect to measure predictability by considering the number of possible choices for $S[n]$, given $S[{\ :}n]$, the number of such choices 
ranging from 1 (for perfectly predictable sequences) to $|\mathcal A|$ (for fully random sequences).  However, such measurements beg the question of 
how the prediction is being made.  As we will see, the difficulty of making these predictions, and thus the ability of a receiver to extract pragmatic information 
from $S[{\ :}n]$ can be characterized precisely by the Hausdorff dimension of the individual sequence $S$. (The remarkable result that an individual 
sequence {\it has} a Hausdorff dimension is demonstrated in \cite{LutzIndividualDimension}.)\\

\cite{Pushdown} observes that such predictions can not only be made with a Turing-equivalent computational element, 
but also with a less powerful computational element, such as a finite state machine or a pushdown automaton.  This observation leads to definitions 
of so-called finite state and pushdown dimensions, in which the Turing-equivalent computational element is replaced, respectively, by a finite state machine 
or a pushdown automaton. We conjecture that a corresponding dimension could be defined for the linear bounded automaton, though such a
definition has not, to our knowledge, appeared in the literature.  Given this conjecture, we would then have
a distinct definition of dimension for each of the classes of automata in the Chomsky hierarchy, to be denoted dim$_\mathfrak F(S)$, dim$_\mathfrak P(S)$, 
dim$_\mathfrak B(S)$, and dim$_\mathfrak T(S)$ for finite state dimension, pushdown dimension, linear bounded 
dimension, and Turing-equivalent dimension ({\it i.e.} the extended
Hausdorff dimension), respectively.  Since pushdown automata are strictly more powerful than finite state machines, 
linear bounded automata strictly more powerful than pushdown automata, and general Turing machines 
strictly more powerful than linear bounded automata, we would certainly have (with the conjecture properly flagged as such)
\begin{align}
{\rm dim}_\mathfrak F(S) \geq {\rm dim}_\mathfrak P(S) \stackrel{?}{\geq} {\rm dim}_\mathfrak B(S) \stackrel{?}{\geq} 
{\rm dim}_\mathfrak T(S).
\label{eq: DimensionInequality}
\end{align}
Below, we show that, except for the conjectured ones, the inequalities are, in some cases, strict.  We then show that such 
statements imply that the more 
powerful automata in the Chomsky hierarchy can extract qualitatively more pragmatic information from its input.\\

Given that ${\rm dim}_\mathfrak B(S)$ seems not to have even been defined yet, it should be no surprise that the entirety of the above claim must,
at present, remain a conjecture.  However, we establish the remaining pieces of the above claim and 
unpack their significance in the remainder of this section.  We start by characterizing those sequences from which pragmatic information {\it can't} possibly  
be extracted because they aren't predictable.  We do so by, first, considering the {\it Kolmogorov complexity},
$K(m)$ of the string $m$, defined as the length of the shortest computer program that can reproduce $m$.   
Evidently, by the pigeon hole principle, at least some sequences must have $K(m) \approx |m|$;
in other words, at least some sequences must be essentially incompressible.  The infinite sequence $S$ is then defined to be 
{\it Martin-L\"of random} \cite{MLRandomness}, if $K(S[\ :n]) \geq n - c$, for all sufficiently large $n$ and 
some constant, $c$, that does 
not depend on $n$.  Per \cite{LutzIndividualDimension}, Martin-L\"of random strings have Hausdorff dimension 1, the largest 
possible value, reflecting their maximal randomness.  Per \cite{Schnorr}, Martin-L\"of randomness has a number of other properties that 
coincide with our intuitive notions of randomness, such as the ability of 
Martin-L\"of random sequences to ``pass every algorithmically 
implementable statistical test of randomness''.\\ 

One way of formulating such tests is via a generalization of the idea of a martingale, familiar from probability theory.  In that theory, the term 
refers to a sequence of real valued random variables, $\ldots Z_{i-1}, Z_i, Z_{i+1}, \ldots $, 
such that $E[|Z_i |] < \infty$ and
$$
\mathcal E\Big[Z_{i+1}| Z_i, Z_{i-1}, \ldots  \Big]  = Z_i,
$$
which is equivalent to
$$
\mathcal E\Big[Z_{i+1}- Z_i| Z_i, Z_{i-1}, \ldots  \Big]  = 0.
$$
Thus, such a sequence of random variables is a formalization of the cumulative winnings in a fair game, in which, on average, 
nothing is gained or lost.\\

We generalize this idea from real valued random variables to functions on strings by defining a martingale as any function, $w: \mathcal A^* 
\rightarrow \Re^+$, such that
$$
w(m) = \frac{1}{|\mathcal A|} \sum_{\alpha \in \mathcal A}w(m + \alpha),
$$
for all $m \in\mathcal A^*$ and the expression $m + \alpha$ is interpreted as the appending of the symbol $\alpha$ to the 
end of $m$.  As with the horse race, $w(m)$ can be interpreted as the total wealth 
generated by a sequence of bets on the successive symbols of $S$, given a unit initial investment, provided that, 
on the $i^{\rm th}$ bet,
\begin{itemize}
\item all capital is re-invested,
\item the fraction of the capital bet on a given symbol $\alpha^* \in \mathcal A$ is multiplied by $|\mathcal A|$ if $S[i]$ 
turns out to be $\alpha^*$, and 
\item the fraction of the capital bet on any other symbol in $\mathcal A$ is completely lost. 
\end{itemize}
We preserve the notion of a fair game in the sense that equal amounts of capital bet on each $\alpha \in
\mathcal A$ (the null bet) result in $w(S[{\ :}i]) = w(S[{\ :}i+1])$, regardless of the outcome of the $i^{\rm th}$ 
race.  This is because any losses arising from betting on the wrong symbols will be exactly recovered by
gains from betting on the right symbol.\\

Note that the above definition does not preclude martingales from systematically increasing as $i$ increases.  In fact, if 
the entire amount of existing capital, 
$w(S[{\ :}i])$, is bet correctly on the $i^{\rm th}$ 
race, $w(S[{\ :}i+1]) = |\mathcal A|\  w(S[{\ :}i])$, so any correct 
bet of most of the capital most of the time leads to exponentially growing capital.  
In contrast, 
if $S$ is random, then bets on the successive races will not. 
\\

In between the above two extremes are sequences that are somewhat predictable.  
For example, consider the set $C$ of all sequences defined on the trinary alphabet
$\mathcal A = \{\textrm{`0', `1', `2'}\}$ in which the `1' character doesn't appear, as in the original 
Cantor set, but is otherwise Martin-L\"of random.  A betting strategy that allocates half the existing capital to `0' and half to `2' is guaranteed to grow 
that capital by a factor of 1.5, either because the half of the capital allocated to `0' gets tripled or because the half of 
the capital allocated to `1' gets tripled.  The invested capital grows exponentially in both cases, albeit more slowly than 
for a fully predictable sequence, such as $\ldots \textrm{`0', `0', `0',} \ldots$, because {\it all} of the existing capital gets 
tripled with each bet in the latter case.\\

The relevant literature anthropomorphizes $w$ as the winnings of 
a ``gambler'' that ``succeeds'' on a given sequence, $S$, if
$$
\lim \sup_{i \rightarrow \infty} w_s(S[{\ :i]}) = \infty
$$
Formally, the gambler is a 
{\it betting function}, $B: \mathcal A^* \rightarrow \mathbb S_Q$, where  
$$
\mathbb S_Q = \left\{(x_1, x_2, \ldots, x_{|\mathcal A|}) \in [0, 1]^{|\mathcal A|} \  \middle| \ \sum_{i=1}^{|\mathcal A|} x_i = 1 \right\},
$$ 
{\it i.e.} the $|\mathcal A|$ dimensional unit simplex.  In the above, we assume that $w$ and thus $B$ are weakly 
Turing computable\footnote{A function is computable if a deterministic Turing 
machine can compute its value for any of its inputs and then halt.  A function is weakly computable as defined in \cite{Schnorr}
if its value is the limit of computable functions, each of which may be computed via a different Turing machine.} from 
$S[{\ :}i]$.\\

Following the (confusing) convention in the literature that it is $w$, rather than $B$ that succeeds, we have the 
following \cite{Schnorr}
\begin{theorem}
Let $W$ be the set of all possible martingales, let $w \in W$, let 
$\mathfrak S_w = \{S \in \mathbf \mathcal A^\infty | w \textrm{ succeeds on }S\}$, and let 
$$
\mathfrak S_W = \bigcup_{w \in W} \mathfrak S_w,
$$
then the set of sequences $S$ for which 
$\mathfrak S_W$ is empty are exactly the set of Martin-L\"of random sequences in $\mathcal A^\infty$.
\end{theorem}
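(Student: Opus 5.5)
The statement as printed is garbled. I read it in the standard form due to Schnorr: a sequence $S\in\mathcal A^\infty$ is Martin-L\"of random if and only if $S\notin\mathfrak S_W$, i.e. no martingale $w\in W$ succeeds on $S$. Here $W$ is the class of martingales that are weakly computable in the sense of the footnote, which I take to mean \emph{lower semicomputable}: $w$ is the limit of a computable, nondecreasing approximation $w_s\uparrow w$. Without some effectivity restriction the statement is false, since for every $S$ there is a martingale that bets everything on $S$'s next symbol.

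I would not work directly with the complexity definition $K(S[\ :n])\ge n-c$. For plain complexity that condition is satisfied by no sequence at all (Martin-L\"of's oscillation theorem). So I would first replace it by the measure-theoretic definition via Martin-L\"of tests. These are uniformly computably enumerable families of open sets $U_1\supseteq U_2\supseteq\cdots$, with $U_k$ a union of cylinders $[m]$, such that $\mu(U_k)\le|\mathcal A|^{-k}$ for the uniform measure $\mu$; $S$ is random iff $S\notin\bigcap_k U_k$ for every test. I would quote the Levin--Schnorr theorem, with $K$ read as prefix-free complexity, as the bridge between the two definitions. The proof then has two directions.

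\emph{(Non-random $\Rightarrow$ some martingale succeeds.)} Suppose $S\in\bigcap_k U_k$ for a test $(U_k)$. For each $k$ define
\[
w_k(m)=|\mathcal A|^{|m|}\,\mu\big(U_k\cap[m]\big).
\]
\begin{itemize}
\item This is a martingale, because $\mu(U_k\cap[m])=\sum_{\alpha\in\mathcal A}\mu(U_k\cap[m+\alpha])$.
\item It is lower semicomputable, by enumerating $U_k$.
\item It satisfies $w_k(\Lambda)\le|\mathcal A|^{-k}$.
\end{itemize}
Set $w=\sum_k w_k$. Then $w(\Lambda)\le\sum_k|\mathcal A|^{-k}<\infty$, and the averaging identity passes to the sum by monotone convergence, so $w$ is again a lower semicomputable martingale. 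Since $S\in U_k$, some prefix $S[\ :n]$ has $[S[\ :n]]\subseteq U_k$, and then $w_k(S[\ :n'])=1$ for all $n'\ge n$. Summing over $k$, $w(S[\ :n])\to\infty$, so $w$ succeeds on $S$.

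\emph{(Some martingale succeeds $\Rightarrow$ non-random.)} Let $w$ be lower semicomputable and succeed on $S$. Without loss of generality $w(\Lambda)\le1$. Define
\[
U_k=\bigcup\{[m]: w(m)>|\mathcal A|^k\}.
\]
This set is c.e. uniformly in $k$, since $w(m)>|\mathcal A|^k$ is witnessed at a finite stage $w_s(m)>|\mathcal A|^k$. The key estimate is Ville's (Kolmogorov's) inequality,
\[
\mu\{S: \sup_n w(S[\ :n])>c\}\le w(\Lambda)/c.
\]
I would prove it by taking the prefix-free set of minimal strings $m$ with $w(m)>c$ and applying the martingale identity inductively, which gives $\sum_{m}|\mathcal A|^{-|m|}w(m)\le w(\Lambda)$. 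Hence $\mu(U_k)\le|\mathcal A|^{-k}$, so $(U_k)$ is a Martin-L\"of test. Since $\limsup w(S[\ :n])=\infty$, $S$ lies in every $U_k$ and is therefore not random.

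The main obstacle is not either direction; both are routine once set up. It is pinning down the definitions so that the theorem is true. This means reading ``weakly computable'' as lower semicomputable, and routing the paper's complexity-based definition of randomness through Levin--Schnorr with prefix-free $K$. Within the argument itself, the only delicate step is Ville's inequality, specifically restricting to minimal strings so that the cylinders are disjoint.
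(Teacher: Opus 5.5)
Your proof is correct, but the paper has no proof to compare it with: the statement is simply quoted from \cite{Schnorr}. Your argument is the standard one from the literature, essentially Schnorr's own. In one direction, a Martin-L\"of test $(U_k)$ becomes the lower semicomputable martingale $\sum_k |\mathcal A|^{|m|}\mu(U_k\cap[m])$, which tends to infinity along any $S\in\bigcap_k U_k$. In the other, a succeeding lower semicomputable martingale yields the test $U_k=\bigcup\{[m]: w(m)>|\mathcal A|^k\}$, whose measure bound is Ville's inequality, proved over the prefix-free set of minimal strings. Both directions check out as written.

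What your write-up adds to the paper is a set of hypotheses under which the statement is actually true, and those repairs are substantive. Read literally, the paper's footnote (``the limit of computable functions, each of which may be computed via a different Turing machine'') admits every martingale. Any function on $\mathcal A^*$ is a pointwise limit of computable finite tables of rational approximations, so $\mathfrak S_W=\mathcal A^\infty$ and no sequence would be random. Even uniform limits break the theorem. Take the martingale equal to $|\mathcal A|^{|m|}$ when $m$ is a prefix of the base-$|\mathcal A|$ expansion of $\mathbf \Omega_{\mathfrak U}$ and $0$ otherwise. It is a limit of uniformly computable approximations, because $\mathbf \Omega_{\mathfrak U}$ is computably approximable from below and irrational, so its prefixes eventually stabilize. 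Yet it succeeds on the Martin-L\"of random sequence $\mathbf \Omega_{\mathfrak U}$. So the nondecreasing approximation in your reading of ``weakly computable'' is exactly what is needed. Likewise, your switch to prefix-free $K$ is what keeps the paper's complexity definition of randomness from being vacuous.
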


The sensitive dependence of the growth of the gambling winnings on the growth rate suggests a precise measure of
predictability, which makes use of the following 
\begin{definition}
An $\mathbf{s-gale}$ is $w_s: \mathcal A^* \rightarrow \Re^+$, such that
$$
w_s(m) = \frac{1}{|\mathcal A|^s} \sum_{\alpha \in \mathcal A}w_s(m + \alpha),
$$
where $m$ is any string in $\mathcal A^*$ and $s \in [0, 1]$.  
\end{definition}
We restrict the value of $s$ to the unit interval because an $s$-gale with $s>1$ would have $\limsup_{i \rightarrow \infty} w_s(S[{\ :i]}) = \infty \}$
for the uninteresting null bet.  For $s<1$, 
$w_s$ can still be interpreted as the total capital generated by the sequence of bets 
on the successive symbols of $S$, but now with a ``tax'' of $|\mathcal A|^{1-s}$ on the capital after each bet.  The lower the 
value of $s$, the higher the tax.  
Therefore, we will only be able to find $w_s$ such that $\limsup_{i \rightarrow \infty} w_s(S[{\ :i]}) = \infty$ if 
and only if $S$ has some predictability, and the more predictable the sequence, the smaller we can make $s$ and still have 
$\limsup_{i \rightarrow \infty} w_s(S[{\ :i]}) = \infty$.\\

Let $\mathtt{SG}_s$ be the set of $s$-gales for a given $s$.  
The notion of an $s$-gale then permits us, following \cite{LutzIndividualDimension}, to define $\mathrm{dim}_\mathfrak T(S)$, the Hausdorff 
dimension\footnote{We use the subscript $\mathfrak T$ to emphasize the use of a general purpose Turing machine to
do the requisite computation, an assumption that we will relax shortly.}, 
of the individual sequence, $S \in \mathcal A^\infty$, as follows:
\begin{definition}
$$
\mathrm{dim}_\mathfrak T(S) = \inf \left\{s \in \Re^+ \middle|  
\exists w_s \in \mathtt{SG}_s
\mathrm{\ that\ succeeds\ on\ } S \right\}
$$
\end{definition}
Evidently, per the above theorem, the Hausdorff dimension of Martin-L\"of random sequences is 1 because the null bet 
for such sequences succeeds for all $s > 1$\\

Suppose we insist, instead, that $B$ is computed using a finite state machine.  The result is the so-called 
{\it finite state gambler} and a corresponding notion of randomness; 
namely, a sequence, $S$, is {\it finite state random} if no finite state gambler can succeed on it.  {\it Pushdown random} and 
{\it linear bounded automaton random} sequences can be defined similarly.\\

If we define $w^\mathfrak F_s$ as an $s$-gale in which the betting is done with a finite state gambler and $\mathtt{FS}_s$ as 
the set of all such $s$-gales, we can define the 
{\it finite state dimension}, $\mathrm{dim}_\mathfrak F(S)$ of a sequence, $S \in \mathcal A^\infty$, as follows:
\begin{definition}
$$
\mathrm{dim}_\mathfrak F(S) = \inf \left\{s \in \Re^+ \middle|  
\exists w^\mathfrak F_s \in \mathtt{FS}_s \mathrm{\ that\ succeeds\ on\ } S \right\},
$$
\end{definition}
In other words, the finite state dimension of $S$ is the infimum of those real values  of $s$ for which 
$w^\mathfrak F_s$ succeeds on $S$.  The finite state 
dimension for finite state random sequences is also 1 because a finite state machine can implement the null bet
that succeeds for all $s > 1$.\\

\cite{Pushdown} proves the following
\begin{theorem}
$\mathrm{dim}_\mathfrak F(S) = 1$ if and only if $S$ is Borel normal,
\end{theorem}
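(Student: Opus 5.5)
We prove the two directions separately, linking finite-state dimension to block entropies of $S$. For each block length $\ell\ge 1$ and each $u\in\mathcal A^\ell$, let $\pi_{S,n}(u)$ be the frequency of $u$ among the $\lfloor n/\ell\rfloor$ aligned blocks of $S[\ :n]$. Let $H_\ell(S)=\liminf_{n\to\infty}\frac{1}{\ell}\mathcal H_{|\mathcal A|}(\pi_{S,n})$ be the resulting lower block entropy rate, normalized so that $0\le H_\ell(S)\le 1$. The intermediate result we aim for is the entropy characterization
\begin{align*}
\mathrm{dim}_\mathfrak F(S)=\inf_{\ell} H_\ell(S).
\end{align*}
Once this holds, the statement follows from one fact. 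Since $\mathcal H_{|\mathcal A|}$ of a distribution on $\mathcal A^\ell$ equals $\ell$ only for the uniform distribution, and is continuous and strictly concave, we get $\inf_\ell H_\ell(S)=1$ exactly when every aligned block of every length has limiting frequency $|\mathcal A|^{-\ell}$. By the classical Niven--Zuckerman theorem, aligned-block normality for all $\ell$ is equivalent to Borel normality.

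For the direction ``$S$ not normal $\Rightarrow \mathrm{dim}_\mathfrak F(S)<1$'', fix $\ell$ and a subsequence $n_k$ along which $\pi_{S,n_k}\to\pi$ with $\pi$ non-uniform. Then $\mathcal H_{|\mathcal A|}(\pi)<\ell$. Build a finite-state gambler whose states record the position within the current block and the prefix of that block read so far; there are finitely many such states. In each state it bets according to the conditional distribution of the next symbol under $\pi$. Its capital on the $k$-th block is multiplied by $|\mathcal A|^{\ell s}\pi(u)$ when the $s$-gale tax is included. The log-capital after $n_k$ symbols is therefore roughly $\frac{n_k}{\ell}\bigl(\ell s+\sum_u\pi_{S,n_k}(u)\log_{|\mathcal A|}\pi(u)\bigr)$, which tends to $+\infty$ whenever $s>\mathcal H_{|\mathcal A|}(\pi)/\ell$. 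Smoothing $\pi$ (mixing in a small multiple of the uniform distribution) avoids $\log 0$ at an arbitrarily small cost in $s$. This exhibits a succeeding $w^\mathfrak F_s$ with $s<1$.

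For the converse, ``$\mathrm{dim}_\mathfrak F(S)<1\Rightarrow$ not normal'', suppose a finite-state $s$-gale with $s<1$ succeeds on $S$. Its gambler has $Q$ states, and by a standard averaging argument we may assume it uses finitely many (rational) bets. On a normal $S$, every state--block pair occurs with the frequency it would have under a uniformly random i.i.d.\ sequence. To see this, run the automaton on $\ell$-blocks for large $\ell$ and group blocks by the state at their start. Then the average log-growth per symbol is at most $s-1+o_\ell(1)$. This is because, under uniform symbol frequencies, the expected value of $\log_{|\mathcal A|}(|\mathcal A|B_\alpha)$ is at most $0$ by Gibbs' inequality, with equality only for the null bet. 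Hence $w_s(S[\ :n])\to 0$, a contradiction.

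The main obstacle is this converse direction. The difficulty is that the frequency of a block at positions where the automaton is in a given state need not equal its overall frequency, so normality alone does not immediately control the state-conditioned statistics. I would handle this with the block-splitting trick: use long blocks, and note that within a block of length $\ell\gg Q$ the initial state affects only a bounded amount of log-capital. This lets the per-block gain be bounded uniformly over starting states by a function of the block alone, whose average under normal (uniform) block frequencies is negative.
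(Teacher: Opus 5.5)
The paper gives no proof of this statement; it is quoted from \cite{Pushdown}. Your argument therefore has to stand on its own.

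Your reduction to aligned-block frequencies is sound, and so is your first direction. The block gambler that bets the smoothed conditional distribution of a non-uniform limit point $\pi$ succeeds for every $s>\mathcal H_{|\mathcal A|}(\pi)/\ell$. Applied along a subsequence realizing each $\liminf$, it gives $\mathrm{dim}_\mathfrak F(S)\le\inf_\ell H_\ell(S)$. You never prove the reverse inequality, so the ``intermediate'' entropy characterization is not actually established. This is harmless, because only the case $\inf_\ell H_\ell(S)=1$ is needed, and that is what your converse addresses.

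\textbf{The gap is in the converse.} You claim that for $\ell\gg Q$ the starting state changes the log-capital gained on a block by only a bounded amount. This is false. Take $\mathcal A=\{0,1\}$ and a two-state parity counter (every symbol toggles the state) that bets $(0.9,0.1)$ in one state and $(0.1,0.9)$ in the other. On $u=(01)^{\ell/2}$ the two starting states give log-capitals differing by $\ell\log_2 9$. Two absorbing states with different bets give an even cruder example.

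Write $d_q$ for the $s$-gale restarted in state $q$ with capital $1$, and $\mathbb E_u$ for the uniform average over $u\in\mathcal A^\ell$. Without bounded influence, Gibbs' inequality only gives $\mathbb E_u[\log_{|\mathcal A|}d_q(u)]\le(s-1)\ell$ for each fixed $q$. That says nothing about the average of $\max_q\log_{|\mathcal A|}d_q(u)$, which is exactly what your uniform per-block bound needs.

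\textbf{The missing ingredient is the gale averaging identity.} Since $\sum_{u\in\mathcal A^\ell}d_q(u)=|\mathcal A|^{s\ell}$, Jensen gives
\begin{align*}
\mathbb E_u\Big[\max_q\log_{|\mathcal A|}d_q(u)\Big]\le\log_{|\mathcal A|}\sum_{q}\mathbb E_u\big[d_q(u)\big]=\log_{|\mathcal A|}Q-(1-s)\ell ,
\end{align*}
which is negative once $\ell>\log_{|\mathcal A|}Q/(1-s)$.

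To finish, set $G(u)=\max_q\log_{|\mathcal A|}d_q(u)$. Over the aligned blocks $u_j$, with $q_j$ the state at the start of block $j$, you have the factorization $w_s(S[\ :N\ell])=w_s(\Lambda)\prod_j d_{q_j}(u_j)$. Normality gives uniform limiting frequencies of aligned $\ell$-blocks; this is the nontrivial direction of the classical Niven--Zuckerman/Pillai equivalence. Together these give $\log w_s(S[\ :N\ell])\le\log w_s(\Lambda)+N(\mathbb E_u[G]+o(1))\to-\infty$.

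Inside a block the capital grows by at most a factor $|\mathcal A|^{s\ell}$, so $w_s(S[\ :n])\to 0$, contradicting success. A block with $G(u)=-\infty$ only helps, since the capital is zero from then on. This route uses only unconditional aligned-block frequencies. The state-conditioned statistics you flagged as the main obstacle never need to be controlled.
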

where a sequence is Borel normal if any fixed string 
$F \in \mathcal A^*$ appears in $S$ as frequently as it would if the successive characters of $S$ are chosen with a uniform 
distribution over $\mathcal A$.  More formally, if 
$\#(F, S[{\ :i]})$ is the number of times that fixed string $F \in \mathcal A^*$ appears in the first $i$ symbols of $S$, 
a sequence $S \in \mathcal A^\infty$is Borel normal if,  for all $F \in \mathcal A^*$,
$$
\lim_{i \to \infty} \frac{\#(F, S)}{i} = \frac{1}{|\mathcal A|^{|F|}},
$$
and the limit exists. (See \cite{Pushdown} for an example of when it doesn't.)\\

It is widely conjectured (but, to our knowledge, yet to be proven) that the decimal expansions of many familiar irrational numbers, such as 
$\pi$, $e$, and $\sqrt{2}$, are Borel normal.  
However, \cite{Champernowne} proved that the so-called {\it Champernowne sequences}, $C_b$, are indeed Borel normal.  
These sequences are formed by simply 
concatenating the representations of successive natural numbers in the given base, $b$.  Thus, the base 10 version 
would be the sequence $C_{10} = \textrm{`1'\ +\ `2'\ +\ `3'\ }+ \ldots = \textrm{`1234567891011121314...'}$.  
Nevertheless, Champernowne
sequences of any base are clearly predictable by a general purpose Turing machine, from which it follows that $\mathrm{dim}_\mathfrak T(C_B) = 0$ for any base $B$.\\

\cite{Pushdown} defines the pushdown dimension, $\mathrm{dim}_\mathfrak P(S)$, similarly to 
$\mathrm{dim}_\mathfrak F(S)$, except that the finite state 
gambler in the latter is replaced by a {\it pushdown gambler} in the former.  The only difference between the pushdown 
gambler and the finite state gambler is
that the relevant computation is done by a betting-function-augmented, deterministic pushdown automaton, rather than a similarly augmented finite state machine.  
\cite{Pushdown} then demonstrates that $\mathrm{dim}_\mathfrak F(S) > \mathrm{dim}_\mathfrak P(S)$ for certain variations of Champernowne sequences.  In our quest to identify a sequence, $S$, for which 
$\mathrm{dim}_\mathfrak P(S) > \mathrm{dim}_\mathfrak T(S)$, it is tempting to 
conjecture that $\mathrm{dim}_\mathfrak P(C_b) = 1$ for any base $b$
, but we have not been able to prove this conjecture.  However, we have been able to identify another sequence, $\mathscr S$,
for which we have the following 

\begin{theorem}
There exists a sequence, $\mathscr S \in \mathcal A^{\infty}$, such that 
$\mathrm{dim}_\mathfrak P(\mathscr S)  = 1$ but $\mathrm{dim}_\mathfrak T(\mathscr S) = 0$.
\end{theorem}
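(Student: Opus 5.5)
We will build $\mathscr S$ by diagonalizing against all pushdown gamblers, and we will make the diagonalization effective so that a Turing machine can reproduce $\mathscr S$. The idea is that there are only countably many pushdown gamblers, and they can be effectively enumerated as $P_1, P_2, \ldots$. For each rational $s<1$, we want $\mathscr S$ to defeat every $s$-gale induced by these gamblers. At the same time, $\mathscr S$ must be computable. A computable sequence has $\mathrm{dim}_\mathfrak T = 0$, because the $s$-gale that bets all capital on the next symbol given by the computing program multiplies capital by $|\mathcal A|^{s}$ at each step. That is $|\mathcal A|\cdot|\mathcal A|^{s-1}$, which diverges for every $s>0$. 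So the whole burden lies on the lower bound $\mathrm{dim}_\mathfrak P(\mathscr S) = 1$.

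For the lower bound, we will take the following concrete route. Following \cite{Pushdown}, a pushdown gambler running on a string of length $n$ can be simulated by a deterministic procedure in time polynomial in $n$, apart from the lookahead and $\lambda$-transition issues that \cite{Pushdown} handles by bounding stack operations. Consider the first $k$ gamblers $P_1,\ldots,P_k$, each paired with an $s$-gale for $s=1-1/k$. Their average $d_k$ is again an $s$-gale, and it is computable from the prefix. We then construct $\mathscr S$ in stages. In stage $k$ we extend the current prefix over a block of length $n_k$, and at each symbol we choose the letter $\alpha$ that minimizes $d_k(m+\alpha)$. The defining averaging identity guarantees some $\alpha$ with $d_k(m+\alpha) \le |\mathcal A|^{s-1} d_k(m) \le d_k(m)$. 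Hence $d_k$ is nonincreasing along $\mathscr S$ during stage $k$, and each $P_j$ with $j\le k$ has its capital bounded by $k$ times its value at the start of the stage. Choosing the stage lengths $n_k$ to grow fast enough, and choosing every symbol greedily, gives a computable sequence, since each step only requires evaluating finitely many pushdown gamblers on a finite prefix.

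The main obstacle is that the capital may be reset across stage boundaries. Entering stage $k+1$ introduces the new gambler $P_{k+1}$ and a larger $s$. The old gamblers, now diluted in a weighted sum, must not have accumulated unbounded capital during earlier stages, when they were not being controlled. To avoid this, we will use a single global gale $d=\sum_j 2^{-j} d_j$, where $d_j$ is the $s_j$-gale of gambler $j$ and $s_j\uparrow 1$. We will convert everything to a common exponent by using the standard fact that an $s$-gale multiplied by $|\mathcal A|^{(t-s)n}$ is a $t$-gale. With this common form, the greedy choice $\alpha$ minimizing $d$ keeps $d$ bounded along the whole of $\mathscr S$.

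There is a second difficulty. An infinite sum is not computable exactly, so at step $n$ we will truncate it to the first $n$ terms and bound the tail. Each pushdown gamblers' capital after $n$ steps is at most $|\mathcal A|^{n}$, so we will take weights $2^{-j}|\mathcal A|^{-j}$, or we will let gambler $j$ enter only at a position $N_j$ and weight it by $|\mathcal A|^{-N_j}$. This makes the truncation error summable, and $d$ stays bounded by a constant along $\mathscr S$. Then no $P_j$ succeeds at any exponent $s<1$, which gives $\mathrm{dim}_\mathfrak P(\mathscr S)\ge 1$. The matching upper bound $\mathrm{dim}_\mathfrak P(\mathscr S)\le 1$ is trivial, by the null bet together with inequality \eqref{eq: DimensionInequality}.

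The step we expect to be hardest is the bookkeeping that justifies the effective enumeration and simulation of pushdown gamblers, and with it the computability of $\mathscr S$. If that route fails, we will fall back on a counting argument instead. For the first $k$ gamblers at $s<1$, the set of strings of length $n$ on which any of them reaches capital $\ge 1$ has size at most $k|\mathcal A|^{sn}$. So some string of length $n$ avoids all of them, and a brute-force search finds it computably.
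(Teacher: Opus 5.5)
Your main route is correct and is essentially the paper's proof. Both build $\mathscr S$ by an effective greedy diagonalization that at each step picks the symbol minimizing a weighted sum of pushdown martingales. Each newly admitted gambler is damped (the paper uses weight $|\mathcal A|^{-j}$ and restarts $w_j$ at capital $1$) so that its entry adds only a summable amount, and computability of $\mathscr S$ then gives $\mathrm{dim}_{\mathfrak T}(\mathscr S)=0$.

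Make explicit that your common exponent must be $t=1$. The $s_j$ then play no role, and bounding the sum bounds each gambler's martingale and hence all of its $s$-gales with $s<1$, a step the paper also leaves implicit.

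Your fallback counting argument would not work as stated. Bounding capital only at the end of each string neither controls the $\limsup$ nor ensures the chosen strings extend one another. It also needs the same effective simulation of gamblers that it was meant to replace.
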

\begin{proof}
We identify the successive symbols, $\mathscr S[n] \in \mathcal A$, of $\mathscr S$ as follows:
Consider the set $W^{\mathfrak P}$, the set of all martingales for which the betting function is a pushdown automaton.  
Since $W^{\mathfrak P}$ is countable, we give it a lexigraphic ordering, which enables us to identify, for all positive 
integers $n$, a unique 
martingale, $w_n$, which starts with capital $w_n(\mathscr S[n-1]) = 1$.  We then define the function 
$V:  \mathcal A^* \to \Re$ such that\footnote{Readers will note that the Python convention for string prefixes specifies that
$\mathscr S[\ :n]$ is the prefix that includes the symbols in $\mathscr S$ up to, but not including $\mathscr S[n]$.  When we
want to include $\mathscr S[n]$, we write $\mathscr S[\ :n+1]$.}
$$
V(\mathscr S[\ :n]) = \sum_{j=1}^{n-1} |\mathcal A|^{-j}w_j(\mathscr S[\ :n]).
$$
and the functions $E_\alpha:  \mathcal A^* \to \Re$ for all $\alpha \in \mathcal A$ 
$$
E(\alpha, \mathscr S[\ :n]) = \sum_{j=1}^{n-1} |\mathcal A|^{-j}w_j(\mathscr S[\ :n] + \alpha).
$$
We choose $\mathscr S[0] = \alpha_0$, the first symbol in the lexigraphical ordering of $\mathcal A$, and, for $n>0$,
we choose $\mathscr S[n]$ to be that symbol $\alpha_{\min}^{(n)}$ such that 
$$
E(\alpha_{\min}^{(n)}, \mathscr S[\ :n])= \min_{\alpha \in \mathcal A} E(\alpha, \mathscr S[\ :n]) 
$$
If the minimum of the $E$'s is not unique, we choose $\alpha_{\min}^{(n)}$ to be the $\alpha \in \mathcal A$ that 
appears first in $\mathcal A$'s lexigraphic ordering.   
We now claim that 
$$
V(\mathscr S[\ :n+1]) \leq V(\mathscr S[\ :n]) + |\mathcal A|^{-n}w_n(\mathscr S[\ :n+1]).
$$
The last term on the right of the above inequality is just the last term of $V(\mathscr S[\ :n+1])$, so we will have verified 
the claim if we can verify the inequality
$$
\sum_{j=1}^{n-1} |\mathcal A|^{-j}w_j(\mathscr S[\ :n+1]) \leq
\sum_{j=1}^{n-1} |\mathcal A|^{-j}w_j(\mathscr S[\ :n]).
$$
We use the fact that each $w_j$ is a martingale to write the right side of this latter inequality as
$$
\sum_{j=1}^{n-1} |\mathcal A|^{-j}w_j(\mathscr S[\ :n]) = 
\sum_{j=1}^{n-1} \sum_{\alpha \in \mathcal A}|\mathcal A|^{-j-1}w_j(\mathscr S[\ :n] + \alpha).
$$
The verification of the above claim is complete upon observing that, by our definition of $\alpha_{\min}^{(n)}$,
\begin{align*}
\sum_{j=1}^{n-1} \sum_{\alpha \in \mathcal A}|\mathcal A|^{-j-1}w_j(\mathscr S[\ :n] + \alpha) 
                            &\geq \sum_{j=1}^{n-1} |\mathcal A|^{-j}w_j(\mathscr S[\ :n] + \alpha_{\min}^{(n)}) \\
                            &=\sum_{j=1}^{n-1} |\mathcal A|^{-j}w_j(\mathscr S[\ :n+1])
\end{align*}
Because the newly introduced $n^{\rm th}$ martingale $w_n$ can only bet a fraction of its initial capital of 1, its 
maximum possible value after the step after it is introduced is $|\mathcal A|$. Thus, 
$$
|\mathcal A|^{-n} w_n (\mathscr S[\ :n+1] ) \leq |\mathcal A|^{1-n},
$$
and iterating the inequality in the above claim, we conclude that, for all $n$,
$$
V(\mathscr S[\ :n+1]) \leq V(\mathscr S[0]) + \sum_{j=1}^n |\mathcal A|^{1-j } < \infty
$$
Since $V(\mathscr S[\ :n])$ is bounded for all $n$, every individual term in the sum that defines it is bounded, {\it i.e.}
$|\mathcal A|^{-k}w_k(\mathscr S[\ :n]) < c_k$ for all $n$ and some real constant $c_k$ that does not depend on $n$.  
In other words, no $w^{\mathfrak P} \in W^{\mathfrak P}$ can succeed on $\mathscr S$, so
$\mathrm{dim}_\mathfrak P(\mathscr S)  = 1$.  However, every symbol in the above construction has been identified by
a deterministic algorithm; hence, each symbol in $\mathscr S$ is fully predictable by a Turing machine.  It follows that 
$\mathrm{dim}_\mathfrak T(\mathscr S) = 0$.
\end{proof}

With the above theorems, we have indeed verified that the inequalities in \eqref{eq: DimensionInequality}, with the exception
of those involving the hypothetical bounded linear Turing dimension, ${\rm dim}_\mathfrak B(S)$, can sometimes be strict.
\\

Next, in order to make contact with notions of pragmatic information, we introduce probabilities.  We postulate that, {\it a priori}, $S[i]$ can be any of the symbols in $\mathcal A$.  In the previous section's 
language of symmetries, we might call this the ``symmetry of ignorance'', where every possibility is as likely as any other, 
because we have nothing that tells us otherwise.  Thus, $\mathbf q = \mathbf 1/|\mathcal A|$.
In processing $S[{\ :n}]$, the automaton 
$\mathfrak A \in \left\{\mathfrak F, \mathfrak P, \mathfrak T\right\}$  
has partially broken the symmetry of ignorance by reducing the number of possibilities for $S[n]$ from 
$|\mathcal A|$ to roughly $|\mathcal A|^D$, where $D<1$ is the relevant dimension of $S$.  We therefore postulate that 
$m=S[{\ :n}]$ and $p_{\alpha'|m} = 1/|\mathcal A|^D$ if symbol $\alpha' \in \mathcal A' \subseteq \mathcal A$ 
remains a possibility for 
$S[n]$ and zero otherwise.  Since the $D$-gale $w_D$ is the 
decision maker of previous sections, with input $m = S[{\ :n}] \in \mathcal A^n$ and output $\alpha \in \mathcal A$, we 
conclude that the pragmatic information per bet is
\begin{align*}
\Phi_\Delta (\mathcal S; \mathcal A) &= \lim_{n \to \infty} \Phi_\Delta (\mathcal A^n; \mathcal A) \\
                                                     &= \lim_{n \to \infty} \sum_{\alpha \in \mathcal A, m \in \mathcal A^n} \varphi_m  p_{\alpha|m} 
						                           \log \left(\frac{p_{\alpha|m}}{q_\alpha}\right)  \\
                                                     &= \lim_{n \to \infty} \sum_{\alpha' \in \mathcal A', m \in \mathcal A^n}
                                \frac{\varphi_m}{|\mathcal A|^D} \log \left[\frac{|\mathcal A|^{-D}}{|\mathcal A|^{-1}}\right]  \\
      &= (1-D) \log |\mathcal A|.\\
\end{align*}
Results for gamblers of each of the three automata for which the relevant dimension is defined are identical, 
except, as we have seen, the actual numerical values of $D$ can sometimes 
be different for these different automata classes.  In such cases, the more powerful 
class of automata, $\mathfrak A^*$, will be able to extract 
$$
\log |\mathcal A| \left[\rm{dim}^{\mathfrak A^*}(S) - \rm{dim}^\mathfrak A(S)\right] > 0
$$
more bits of pragmatic information per bet than the less powerful class of automata,  $\mathfrak A$.  Evidently, this result 
will be also be true for minor variations of the above probabilities.  The conclusions of this subsection can therefore be 
summarized by the following
\begin{theorem}
There are choices of the above probabilities and sets of symbol sequences, $\mathcal S$, for which 
$\Phi_\mathfrak F (\mathcal S; \mathcal A) < \Phi_\mathfrak P (\mathcal S; \mathcal A)$.  Similarly, there are choices of 
probabilities and sets of symbol sequences, $\mathcal S'$,  for which
$\Phi_\mathfrak P (\mathcal S'; \mathcal A) < \Phi_\mathfrak T (\mathcal S'; \mathcal A)$.
\end{theorem}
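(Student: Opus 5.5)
The plan is to combine the per-bet formula just derived, $\Phi_{\mathfrak A}(\mathcal S;\mathcal A) = (1-D_{\mathfrak A})\log|\mathcal A|$, with the two separation results already established. Here $D_{\mathfrak A}$ is the dimension of the sequences in $\mathcal S$ relative to automaton class $\mathfrak A$. Both inequalities then reduce to showing a strict inequality between dimensions, because the map $D \mapsto (1-D)\log|\mathcal A|$ is strictly decreasing in $D$ when $|\mathcal A|\geq 2$. So it suffices to exhibit, in each case, a set of sequences whose dimension relative to the weaker automaton class strictly exceeds its dimension relative to the stronger one. We also need to check that the postulated probabilities can be chosen consistently for those sequences.

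For the second claim, I would take $\mathcal S' = \{\mathscr S\}$, the single sequence built in the preceding theorem. There $\mathrm{dim}_\mathfrak P(\mathscr S)=1$ and $\mathrm{dim}_\mathfrak T(\mathscr S)=0$. We put all message mass on the prefixes $m=\mathscr S[\ :n]$, so $\varphi_m=1$ for that prefix, and keep the uniform prior $\mathbf q=\mathbf 1/|\mathcal A|$.
\begin{itemize}
\item For the pushdown gambler, $D=1$, so $\mathbf p_m=\mathbf q$ and $\Phi_\mathfrak P=0$.
\item For the Turing gambler, $D=0$: the next symbol is computed deterministically by the construction, so $\mathbf p_m=\mathbf u_{\mathscr S[n]}$. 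This message is pragmatically definitive, and $\Phi_\mathfrak T=\log|\mathcal A|>0$, attaining the bound \eqref{eq: MaxPragInfo}.
\end{itemize}
Hence $\Phi_\mathfrak P(\mathcal S';\mathcal A)<\Phi_\mathfrak T(\mathcal S';\mathcal A)$. This half is essentially direct, and the limit in the definition of $\Phi_\Delta(\mathcal S;\Omega)$ exists trivially because every term equals the same constant.

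For the first claim, I would invoke the result of \cite{Pushdown} that certain variants of Champernowne sequences $S$ satisfy $\mathrm{dim}_\mathfrak F(S) > \mathrm{dim}_\mathfrak P(S)$, and take $\mathcal S=\{S\}$. Writing $D_\mathfrak F > D_\mathfrak P$ for these two dimensions, we assign $p_{\alpha|m}=|\mathcal A|^{-D}$ on a surviving subset $\mathcal A'$ of size $|\mathcal A|^{D}$, as postulated above. The per-bet formula then gives $\Phi_\mathfrak F=(1-D_\mathfrak F)\log|\mathcal A| < (1-D_\mathfrak P)\log|\mathcal A| = \Phi_\mathfrak P$.

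The main obstacle is that $|\mathcal A|^{D}$ need not be an integer, so the uniform-on-$\mathcal A'$ ansatz is only approximate. I would handle this through the freedom the statement grants (``there are choices of the above probabilities''). Concretely, I would choose any $\mathbf p_m$ whose Kullback–Leibler divergence from the uniform $\mathbf q$ equals $(1-D)\log|\mathcal A|$, which is possible because that divergence ranges continuously over $[0,\log|\mathcal A|]$. Equivalently, I would note that the result is stable under minor variations of the probabilities. If $D_\mathfrak P=0$ in the cited example, this step can be simplified by using definitive messages as in the second case.
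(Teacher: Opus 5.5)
Your proposal is correct and takes essentially the paper's own route. You combine the per-bet formula $\Phi=(1-D)\log|\mathcal A|$ with the Doty--Nichols separation $\mathrm{dim}_\mathfrak F>\mathrm{dim}_\mathfrak P$ and with the constructed sequence $\mathscr S$ satisfying $\mathrm{dim}_\mathfrak P(\mathscr S)=1$ and $\mathrm{dim}_\mathfrak T(\mathscr S)=0$, so that the gap equals $\log|\mathcal A|$ times the weaker class's dimension minus the stronger class's; your choice of a $\mathbf p_m$ with entropy $D\log|\mathcal A|$ for non-integer $|\mathcal A|^D$ makes precise the paper's remark that the result survives ``minor variations of the above probabilities.''
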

\begin{theorem}
There are sets of symbol sequences, $\mathcal S$, for which 
$\rm{dim}^{\mathfrak F}(S) = 1$ for $S \in \mathcal S$
but $\rm{dim}^{\mathfrak T}(S) = 0$ for $S \in \mathcal S$.  Similarly, 
there are sets of symbol sequences, $\mathcal S'$, 
for which $\rm{dim}^{\mathfrak P}(S') = 1$ for $S \in \mathcal S'$
but $\rm{dim}^{\mathfrak T}(S') = 0$ for $S' \in \mathcal S$.  Thus, 
there are deterministic symbol sequences from which no pragmatic information can be extracted by a finite state machine 
or a pushdown automaton because they appear random to these automata.
\end{theorem}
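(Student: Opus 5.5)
The statement has two existence claims and a consequence; I would obtain all three from results already established above rather than from a new construction.

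For the finite-state claim, take $\mathcal S = \{C_b\}$, the singleton containing a Champernowne sequence in base $b = |\mathcal A|$. By \cite{Champernowne}, $C_b$ is Borel normal. The theorem of \cite{Pushdown} stated above (dim$_\mathfrak F(S)=1$ iff $S$ is Borel normal) then gives $\mathrm{dim}_\mathfrak F(C_b)=1$. For the Turing side, I would exhibit an explicit computable betting function. Since $C_b[n]$ is a computable function of $n$, let the gambler put its whole capital on the next symbol at every step. This multiplies the capital by $|\mathcal A|$ per step, so the $s$-gale version grows like $|\mathcal A|^{sn}$ for every $s>0$. Hence it succeeds for every $s>0$, and $\mathrm{dim}_\mathfrak T(C_b)=0$. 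This is exactly the remark made after the Champernowne discussion, and I would write it out in this form.

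For the pushdown claim, take $\mathcal S' = \{\mathscr S\}$ from the preceding theorem, which already gives $\mathrm{dim}_\mathfrak P(\mathscr S)=1$ and $\mathrm{dim}_\mathfrak T(\mathscr S)=0$. One point needs care. That proof only shows that no pushdown \emph{martingale} succeeds, so formally it gives pushdown randomness, which is the $s=1$ case. To get dimension $1$ I also have to rule out pushdown $s$-gales with $s<1$. For this I would use the standard conversion: an $s$-gale $w_s$ corresponds to the martingale $w(m)=|\mathcal A|^{(1-s)|m|}w_s(m)$ with the same betting function $B$, so it is computable by the same pushdown gambler. Success of $w_s$ therefore implies success of a pushdown martingale, which the diagonal construction excludes. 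This conversion is the step I expect to be the main obstacle. It relies on the pushdown martingales in the enumeration being exactly those induced by pushdown betting functions, so that the conversion stays inside $W^{\mathfrak P}$. It also relies on the construction's countable enumeration covering them, since $W^{\mathfrak P}$ is countable only if betting values are rational. If that fails, I would restrict to rational-valued pushdown gamblers and appeal to \cite{Pushdown} for the fact that these suffice to define $\mathrm{dim}_\mathfrak P$.

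Finally, the pragmatic conclusion follows from the computation preceding the theorem. That computation gives $\Phi_\Delta(\mathcal S;\mathcal A)=(1-D)\log|\mathcal A|$ per bet, where $D$ is the dimension relative to the gambler's automaton class. With $D=1$ for $\mathfrak F$ on $C_b$, and for $\mathfrak P$ on $\mathscr S$, the extracted pragmatic information is $0$, while the Turing gambler extracts the full $\log|\mathcal A|$ bits per symbol. Both sequences are deterministic: one is given by an explicit rule, the other by the algorithm in the previous proof. So they are deterministic sequences that the weaker automata cannot distinguish from random ones. I would add a remark that the mislabeled ``$S'$'' in the statement should read $S\in\mathcal S'$.
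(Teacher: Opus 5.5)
Your proposal is correct and follows the paper's own argument, which the paper gives only implicitly by presenting the theorem as a summary of the subsection: Borel normality of $C_b$ plus the Doty--Nichols characterization for the finite-state case, the diagonal sequence $\mathscr S$ for the pushdown case, and the formula $(1-D)\log|\mathcal A|$ for the pragmatic conclusion. Your conversion $w(m)=|\mathcal A|^{(1-s)|m|}w_s(m)$ from pushdown $s$-gales to pushdown martingales supplies a step that the paper's proof of $\mathrm{dim}_\mathfrak P(\mathscr S)=1$ skips, since that proof only rules out successful martingales.
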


\subsection{A Case in which Enhanced Computational Ability Doesn't Increase Pragmatic Information but Does Increase Pragmatically Useful Information }
This section proves the following
\begin{theorem} Enhanced computational abilities can increase pragmatically useful information, even if these
enhanced abilities don't increase the messages' pragmatic information.  
\end{theorem}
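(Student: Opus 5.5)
The theorem is existential, so I will prove it with one explicit example. The plan is to build two decision makers $\Delta$ and $\Delta'$ with the same prior $\mathbf q$, the same message ensemble $\mathcal M$, and the same pragmatic information. $\Delta'$ has the more powerful automaton, and it gets more pragmatically useful information in the sense of the value-function decomposition $\mathcal M = \mathcal D \cup \mathcal I \cup \mathcal U$ from Section 2.

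The mechanism I will use is that $\Phi_\Delta$ is a weighted sum of Kullback--Leibler divergences $D(\mathbf p_m\|\mathbf q)$. These divergences are insensitive to \emph{which} outcome a message points to, only to how sharply it points. So a weak automaton can move its beliefs just as far as a strong one, but in the wrong direction.

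\emph{Setup.} Take $\Omega=\{\omega_1,\omega_2\}$ with $q_1=q_2=1/2$ and $\mathcal A=\{\mathtt a,\mathtt b\}$. Let $\mathcal M=\mathcal L_1\cup\mathcal L_2$ with $\mathcal L_1=\{\mathtt a^n\mathtt b^n\}$, which is context free and not regular. Let $\mathcal L_2$ be the strings $\mathtt a^n\mathtt b^k$ with $k\neq n$. Put a distribution $\varphi$ on $\mathcal M$ giving total weight $1/2$ to each class. The true state is $\omega_1$ when $m\in\mathcal L_1$ and $\omega_2$ otherwise.

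\emph{The two decision makers.} $\Delta'$ uses a pushdown automaton $\mathfrak P$. It recognizes $\mathcal L_1$ exactly and outputs $\mathbf p_m=\mathbf u_1$ on $\mathcal L_1$ and $\mathbf u_2$ on $\mathcal L_2$. These messages are pragmatically definitive, so $\Phi_{\Delta'}=\log 2=1$ bit.

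$\Delta$ uses a finite state machine $\mathfrak F$. By the Myhill--Nerode argument behind the finite-state-machine bound of Section 4.2, $\mathfrak F$ cannot separate $\mathcal L_1$ from $\mathcal L_2$. I will choose $\mathfrak F$ to output a definitive vector from a regular criterion, for instance the parity of $|m|$. On strings of odd length it outputs $\mathbf u_2$, which is correct, since odd-length strings cannot lie in $\mathcal L_1$. On strings of even length it outputs $\mathbf u_1$, which is wrong on the even-length members of $\mathcal L_2$. Every message is still definitive, so $\Phi_\Delta = \sum_m\varphi_m(-\log\tfrac12)=1=\Phi_{\Delta'}$. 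The two decision makers therefore have equal pragmatic information.

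\emph{Value function.} Define $V(m)=+1$ if the decision maker's definitive posterior names the true $\omega$, and $V(m)=-1$ otherwise. Then $\mathcal U_{\Delta'}=\mathcal M$ and $\mathcal D_{\Delta'}=\emptyset$. For $\Delta$, the misclassified even-length strings of $\mathcal L_2$ fall in $\mathcal D_\Delta$; I will give them positive weight in $\varphi$, for example by putting mass on $\mathtt a^n\mathtt b^{n+2}$. The conclusion is $\Phi_{\Delta}(\mathcal U_\Delta;\Omega)<\Phi_{\Delta'}(\mathcal U_{\Delta'};\Omega)$, which uses the additive split of $\Phi$ over $\mathcal D,\mathcal I,\mathcal U$ stated in Section 2.

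\emph{Main obstacle.} The step needing care is justifying that \emph{every} finite-state $\Delta$ which matches $\Phi_{\Delta'}$ must misclassify a set of positive $\varphi$-weight, rather than just my particular parity machine. If the theorem is read as a pure existence claim, the single example suffices. If a stronger reading is wanted, I would argue as follows. A finite automaton attaining $\Phi=1$ must be definitive on almost every message. Its accepted set for $\mathbf u_1$ is regular, and by pumping on $\mathtt a^n\mathtt b^n$ it must differ from $\mathcal L_1$ on infinitely many strings. Choosing $\varphi$ with full support then forces positive misclassified mass.
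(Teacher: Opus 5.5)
Your proof is correct, and it rests on the same mechanism as the paper's. $\Phi_\Delta$ registers only how sharply each $\mathbf p_m$ departs from $\mathbf q$, not whether it points at the true $\omega$. So a machine that is confidently wrong on messages beyond its power keeps its pragmatic information while losing useful information.

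The constructions differ. The paper keeps a single finite state machine $\mathfrak F$ that classifies definitively by a regular $\mathcal L$, and changes the target instead. Using the pumping lemma it forms $\mathcal L'$ by deleting the strings $xy^nz$ with $n$ prime, notes that no $\mathfrak F$ can detect $\mathcal L\setminus\mathcal L'$, and concludes $\Phi_{\mathfrak F}(\mathcal U;\Omega)<\Phi_{\mathfrak F}(\mathcal M;\Omega)$ for that one machine. The stronger decision maker that would recover the lost useful information is never exhibited. In the simplest instance ($\mathcal L=\mathtt a^*$, $y=\mathtt a$) it would have to recognize the non-prime lengths, which is already beyond any pushdown automaton.

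You instead fix the target $\mathcal L_1=\{\mathtt a^n\mathtt b^n\}$ and exhibit both machines, a parity-based $\mathfrak F$ and an exact $\mathfrak P$. You verify directly that $\Phi_\Delta(\mathcal M;\Omega)=\Phi_{\Delta'}(\mathcal M;\Omega)=1$, while $\Phi_\Delta(\mathcal U_\Delta;\Omega)=1-\sum_{m\in\mathcal D_\Delta}\varphi_m<1=\Phi_{\Delta'}(\mathcal U_{\Delta'};\Omega)$.

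The paper's route shows the phenomenon can be built from any infinite regular language. Yours has several advantages:
\begin{itemize}
\item It matches the comparative wording of the theorem literally: two machines, equal $\Phi$, and strictly more useful information for the stronger one.
\item Every quantity is explicit.
\item The separation already appears between the two lowest levels of the hierarchy.
\item Your optional argument, that every finite-state $\Delta$ attaining $\Phi=1$ misclassifies positive mass under a full-support $\varphi$, goes beyond anything the paper proves.
\end{itemize}
For that last step, the quickest justification is that a regular set differing from $\mathcal L_1$ on only finitely many strings would make $\mathcal L_1$ regular. Your pumping argument also works, but it needs a case split on whether the $\mathbf u_1$-set contains infinitely many strings $\mathtt a^n\mathtt b^n$.
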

\begin{proof} Suppose that $\Delta$ processes messages via the finite state machine $\mathfrak F$, which outputs $\omega_1$ 
if $m \in \mathcal L \subset \mathcal M$; otherwise, $\mathfrak F$ outputs $\omega_0$ (As per the above, $\mathcal L$ is 
necessarily a regular language.).
If we assume further that the respective prior probabilities of $\omega_0$ and $\omega_1$ are $q_0$ and $q_1$, that 
${\rm Pr}\left\{ m \in \mathcal L \right\}$ is the probability that $m \in \mathcal L$,
and that $\mathfrak F$ can perfectly determine whether a given $m$ belongs in $\mathcal L$ or not ({\it i.e.} $\mathcal M$ is 
 pragmatically definitive), then 
\begin{align*}
\Phi_{\mathfrak F} ({\mathcal M}; \Omega) &= \sum_{m \notin \mathcal L} \varphi_m  p_{0|m} \log \left( \frac{p_{0|m}}{q_0} \right)  + \sum_{m \in \mathcal L} \varphi_m  p_{1|m} \log \left( \frac{p_{1|m}}{q_1} \right) \cr
                                                             &=\sum_{m \notin \mathcal L} \varphi_m   \log \left( \frac{1}{q_0} \right)  + \sum_{m \in \mathcal L} \varphi_m   \log \left( \frac{1}{q_1} \right) \cr
                                                             &=-{\rm Pr}\left\{ m \notin \mathcal L \right\} \log q_0 -{\rm Pr}\left\{ m \in \mathcal L \right\} \log q_1.
\end{align*}
Finally, suppose we agree that all messages $m \in \mathcal M$ are pragmatically useful when $\mathfrak F$ classifies them properly.  If $m \in \mathcal L$ is a proper classification, $\Phi_{\mathfrak F} ({\mathcal U}; \Omega) = \Phi_{\mathfrak F} ({\mathcal M}; \Omega)$.  This will no longer be true if we make more stringent computational demands, as we now show. \\

To make the above demonstration, we need the {\it pumping lemma for regular languages} \cite{Sipser}, which states that there is an integer $p \geq 1$, depending only on $\mathcal L$, such that 
\begin{itemize}
\item every message $m \in \mathcal L$ with length $|m| \geq p$ can be split into three substrings $x, y$ and $z$, such that $m$ is their concatenation, {\it i.e.} $m = xyz$, 
\item $|y| \geq  1$, 
\item  $|xy| \leq p$, and 
\item  every string of the form $m = xy^nz$ is in $\mathcal L$, with $y^n$ defined as the $n$-fold concatenation of $y$ with itself.
\end{itemize}
Consider now a different language, $\mathcal L'$, with the same messages as those in $\mathcal L$, except that strings of the form $m' = xy^nz$ are {\it not} in $\mathcal L'$ if $n$ is prime.  We now suppose that the only pragmatically useful messages
are those in $\mathcal L'$.  Because no finite state machine $\mathfrak F$ can determine whether a given number is prime \cite{AHU}, no $\mathfrak F$ can determine that a given input message is in $\mathcal L$, but not in $\mathcal L'$.  The inability to make this determination does not affect the value of $\Phi_{\mathfrak F} ({\mathcal M}; \Omega)$.  However, it decreases the value of $\Phi_{\mathfrak F} ({\mathcal U}; \Omega)$ for any assignment of probabilities to the messages in $\mathcal M$ for which $\mathcal L \setminus \mathcal L'$ has positive probability.  It follows that $\Phi_{\mathfrak F} ({\mathcal U}; \Omega) < \Phi_{\mathfrak F} ({\mathcal M}; \Omega)$ in such cases, because the mis-classification of messages in $\mathcal L \setminus \mathcal L'$ does not affect $\Phi_{\mathfrak F} ({\mathcal M}; \Omega)$.
\end{proof}

\subsection{Why Pragmatic Information Need Not Be Computable}

First, a bit of background:  Consider a function $f: \mathcal X_f \rightarrow \mathcal Y_f$, where
$\mathcal X_f \subset \mathcal A^*$ and $\mathcal Y_f \subseteq \mathcal A^*$ (The reason for the important restriction that $\mathcal X_f$ is a strict subset of $\mathcal A^*$ will become clear in a moment.).  Suppose $f$ is computable, meaning that there is a Turing machine, $\mathfrak T_f$, that always halts with $f(x) = y$ on its tape if $x$ is on its tape initially,
for all $x \in \mathcal X_f$.  The existence of universal Turing machines implies that $\mathfrak T_f$ can be simulated by some other Turing machine $\mathfrak T_{\mathfrak U}$ in the following sense:  For any computable $f$, there is some $p \in \mathcal A^*$, depending on $f$, such that, when $p$ is prefixed to $x$ on $\mathfrak T_{\mathfrak U}$'s input tape, outputs $f(x) = y$ and halts, for all $x \in \mathcal X_f$ (Informally, $p$ can be thought of as a computer program for which $x$ is the input and $f(x) = y$ is the output; more formally, there is a function $\mathfrak U: \mathcal A^* \rightarrow \mathcal A^*$,
computable on the subset, 
$\mathbb D_\mathfrak U$, of strings for which $\mathfrak T_{\mathfrak U}$ halts and a string $p \in \mathcal A^*$ such that $\mathfrak U(p x) = f(x) = y$ for any computable $f: \mathcal X_f \rightarrow \mathcal Y_f$.  Note that implicit in this construction is the requirement that such strings $p x$ are prefix free.  A necessary and sufficient condition for an infinite set of strings $s_1, s_2, \ldots $ to be prefix free is the Kraft inequality, namely that
$$
\sum_{k=1}^\infty |\mathcal A|^{-|s_k|} \leq 1.
$$
Since the set of all possible prefix free strings
contains some input strings to $\mathfrak U$ for which $\mathfrak U(m)$ is undefined because $
\mathfrak T_{\mathfrak U}$ doesn't halt, the Kraft inequality is strict for $m \in \mathbb D_{\mathfrak U}$, {\it i. e.}
$$
\mathbf \Omega_{\mathfrak U} = \sum_{m \in \mathbb D_{\mathfrak U}} |\mathcal A|^{-|m|} < 1.
$$
$\mathbf \Omega_{\mathfrak U}$ is sometimes called Chaitin's constant (though its actual value can differ radically depending on the details of how $m$ is encoded in $\mathfrak U$) \cite{ChaitinConstant}.  
$\mathbf \Omega_{\mathfrak U}$ can be interpreted as the probablilty that a 
universal Turing machine of a given construction will halt when its tape is initialized with the string $m \in \mathcal A^*$
when $m$ is chosen with probability $|\mathcal A|^{-|m|}$ (Note 
that these probabilities are well defined because $\mathbb D_{\mathfrak U}$ is prefix free.).\\

$\mathbf \Omega_{\mathfrak U}$ is not Turing computable because computing it it would require a knowledge of which 
strings $m$ are in $\mathbb D_{\mathfrak U}$, {\it i.e.} whether $\mathfrak U$ halts on such $m$'s.  For this reason, 
$\mathbf \Omega_{\mathfrak U}$, was one 
of the first specific numbers to be proven algorithmically random.
\\

We can now state and prove the following 
\begin{theorem} 
$\Phi_{\Delta} ({\mathcal M}; \Omega)$ can be Turing uncomputable if $\mathfrak A_\Delta$, $\Delta$'s input processor, is the equivalent of a general purpose, deterministic Turing machine.
\end{theorem}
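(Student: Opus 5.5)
We will build one explicit decision maker $\Delta$ whose pragmatic information is a simple, computable function of Chaitin's constant $\mathbf \Omega_{\mathfrak U}$. Since $\mathbf \Omega_{\mathfrak U}$ is not Turing computable, neither is $\Phi_\Delta$.

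\emph{Construction.} Take the message ensemble $\mathcal M$ to be a prefix free set containing the domain $\mathbb D_{\mathfrak U}$ of the universal machine. A concrete choice: all strings of a fixed prefix free encoding in which every program string appears. Give each message the probability $\varphi_m = |\mathcal A|^{-|m|}$, normalised by the Kraft sum (if needed, add an ``escape'' message to absorb the leftover mass). Take $\Omega = \{\omega_0, \omega_1\}$ with fixed rational priors $q_0, q_1$, say $q_0=q_1=1/2$.

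Let $\mathfrak A_\Delta$ be a deterministic Turing machine that simulates $\mathfrak T_{\mathfrak U}$ on $m$. If the simulation halts, $\Delta$ outputs the pragmatically definitive $\mathbf p_m = \mathbf u_1$. If it does not halt, $m$ is never recognized, so by the convention of Section 4.1 $\Delta$ retains $\mathbf p_m = \mathbf q$ and the message contributes zero pragmatic information. The set of recognized messages is then exactly the recursively enumerable language $\mathbb D_{\mathfrak U}$, which fits the top row of Table 1.

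\emph{Computation.} Substituting into \eqref{eq: BasicDef}, exactly as in the computation for the finite state classifier of the preceding subsection, gives
$$
\Phi_\Delta(\mathcal M;\Omega) = \sum_{m \in \mathbb D_{\mathfrak U}} \varphi_m \log \frac{1}{q_1} = c\,\mathbf \Omega_{\mathfrak U}\log\frac{1}{q_1},
$$
where $c$ is the computable normalising constant. With $q_1 = 1/2$ this is just $c\,\mathbf \Omega_{\mathfrak U}$.

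\emph{Conclusion.} Suppose $\Phi_\Delta(\mathcal M;\Omega)$ were Turing computable, meaning a machine could output it to any requested precision. Dividing by the computable nonzero constant $c\log(1/q_1)$ would then compute $\mathbf \Omega_{\mathfrak U}$, contradicting the uncomputability stated just before this theorem. The underlying reason is that each term of the sum requires deciding whether $\mathfrak T_{\mathfrak U}$ halts on $m$, which is the halting problem.

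\emph{Main obstacle.} The delicate step is setting up the ensemble so that the weight of the halting set comes out to exactly a computable multiple of $\mathbf \Omega_{\mathfrak U}$. Two things must hold.
\begin{itemize}
\item \textbf{A genuine probability distribution.} The message probabilities must sum to one, which requires a prefix free ensemble whose Kraft sum is computable. The plan is to use all strings of the self-delimiting encoding, whose Kraft sum is exactly 1, and to fall back on the escape message if it is less.
\item \textbf{Non-halting messages contribute exactly zero.} The ``unrecognized means $\mathbf q$'' convention must be read as a mathematical definition of $\mathbf p_m$, not as something $\Delta$ detects in finite time. It is only this reading that makes $\Phi_\Delta$ well defined yet uncomputable.
\end{itemize}

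I would state both points explicitly rather than leave them implicit, as the paper's earlier discussion of $\mathbf \Omega_{\mathfrak U}$ does.
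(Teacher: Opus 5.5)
Your argument is correct, but it takes a different route from the paper's.

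\textbf{The paper's route.} The paper uses a \emph{single} message $m^+$, supplied with probability one. The outcomes are ``$\mathfrak A_\Delta$ halts'' and ``$\mathfrak A_\Delta$ does not halt'', with priors $q_0=\mathbf \Omega_{\mathfrak U}$ and $q_1=1-\mathbf \Omega_{\mathfrak U}$. It notes $p_{0|m}\in\{0,1\}$ and appeals to the Halting Theorem, so $\Phi_\Delta$ is either $-\log \mathbf \Omega_{\mathfrak U}$ or $-\log(1-\mathbf \Omega_{\mathfrak U})$.

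This is shorter and needs no infinite ensemble or normalisation. However, what actually makes $\Phi_\Delta$ uncomputable there is that \emph{both} candidate values are uncomputable reals. The uncomputability is built into $\Delta$'s prior. For one fixed message the halting bit is just a constant, and the Halting Theorem, which concerns all inputs uniformly, does not by itself make that bit uncomputable.

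\textbf{Your route.} You keep $\mathbf q$ rational and $\varphi$ computable, and let $\mathbf \Omega_{\mathfrak U}$ emerge as the $\varphi$-mass of the halting set, giving $\Phi_\Delta=c\,\mathbf \Omega_{\mathfrak U}\log(1/q_1)$. The uncomputability is then produced by the halting behaviour of the Turing-machine processor $\mathfrak A_\Delta$ across the ensemble. This matches the theorem's hypothesis more faithfully.

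\textbf{The price.} The cost is the bookkeeping you flag. You must choose $\mathfrak U$ so that $\mathbb D_{\mathfrak U}$ lies inside a decidable prefix free set with computable Kraft sum. One example is a machine that accepts only inputs $\bar x$ of a fixed computable self-delimiting code. Such a $\mathfrak U$ need not be universal in Chaitin's sense, so citing the paper's remark on $\mathbf \Omega_{\mathfrak U}$ is not quite enough. Instead, say that its halting mass is still uncomputable by the usual argument:
\begin{enumerate}
\item approximate the halting mass to within $\varphi_m/2$;
\item enumerate halting programs until the mass found exceeds that approximation minus $\varphi_m/2$;
\item if $m$ has not appeared by then, it never halts.
\end{enumerate}
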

\begin{proof}
Suppose that the decision that $\Delta$ is obliged to make is whether $\mathfrak A_\Delta$ will halt for a given message $m$.  We have the two states: $\omega_0 = \mathfrak A_\Delta$ halts on $m$ and $\omega_1 = \mathfrak A_\Delta$ doesn't halt on $m$, so the two prior probabilities are, respectively, $q_0 = \mathbf \Omega_{\mathfrak U}$ and $q_1 = 1 -\mathbf \Omega_{\mathfrak U}$.  We then present 
$\Delta$ with the single message $m^+$, which we supply with probability one.  $p_{0|m}$ is then the probability that  $\mathfrak A_\Delta$ halts on $m^+$, for the particular message $m^+$, and 
$1 -p_{0|m} = p_{1|m}$ is the complementary probability.  Since  $\mathfrak A_\Delta$ is deterministic, $p_{0|m} \in \{0, 1\}$, but, in general, we can't tell which because of the Halting Theorem.  
It follows that we cannot compute 
$$
\Phi_{\Delta} ({\mathcal M}; \Omega) = p_{0|m} \log\left(\frac{p_{0|m}}{\mathbf \Omega_{\mathfrak U}}\right) + p_{1|m} \log\left(\frac{p_{1|m}}{1-\mathbf \Omega_{\mathfrak U}}\right), 
$$
because we can't compute {\it any} of the terms in the right hand side of the above.
\end{proof}

In other words, it may simply not be possible, in the general case, to assess the impact that a given message has on a given receiver.

\section{Towards a ``Channel Capacity'' for Pragmatic Information}
As noted above, traditional information theory is primarily concerned with the accurate transmission of symbols in the 
presence of interference that may garble the symbols in transit.  By treating this interference as a random process 
that can unpredictably substitute some of the transmitted symbols with others, the traditional theory derives its famous ``noisy
coding theorem''.  This theorem establishes the existence of a so-called channel capacity, 
a well-defined maximum rate at which a message can be transmitted over a given communications channel with arbitrarily high fidelity.  
In this section of the paper, we consider the sense in which there is an analogous ``pragmatic channel capacity'', {\it i.e.} a 
maximum rate at which {\it meaning} can be transmitted.\footnote{Note that our notion of pragmatic noise includes, not only 
transmission errors, but also misinterpretations of correctly transmitted symbols.}  
The non-zero time required to transmit a message in classical information theory is roughly equivalent to the non-zero time it 
takes to parse a message (and thus solve the membership problem) in our framework because the pragmatic information 
conveyed by the message is always zero until the message is parsed.  If $\tau_{\mathcal M}$ is this parsing time, 
it is tempting to interpret the ratio $\mathbb R = \Phi_{\Delta} (\mathcal U; \Omega)/\tau_{\mathcal M}$ as the rate at which 
pragmatically useful information, and thus meaning is transmitted.\footnote{The Victor Hugo anecdote related above 
suggests that considerable meaning can 
be expressed very concisely. Yet such examples do not negate the notion of a pragmatic channel capacity; instead, they simply 
demonstrate that this capacity can be very high for a suitably prepared receiver.}
However, if $\Omega$ is a finite set, as we have been assuming all along, $\mathbb R \to 0$ as 
$\max_{m \in \mathcal M} |m| \to \infty$.\\

Nevertheless, it is illuminating to consider how $\tau_{\mathcal M}$ depends on the automaton doing the parsing. 
If we assume that the message is parsed by the least powerful automaton capable of extracting all of the available pragmatic information, we get 
parsing time bounds that are based on the time complexity of parsing messages in the corresponding level in the Chomsky Hierarchy. In each case, the relevant input to the time complexity estimate is 
$|m^+| \in \mathcal U^+$, where$\mathcal U^+$ symbolize, respectively, a single message $m$ augmented by whatever internal data is required for its parsing, and 
the ensemble of all pragmatically useful messages, each similarly augmented.\\

We note the dramatic differences in parsing times for languages at different levels of the hierarchy.\\

{\it Regular Languages}\\
  
As we see from Table 1, words in regular languages can be parsed by finite state machines, such as $\mathfrak F$, as described in the Appendix.  The time complexity required for a finite state machine 
with $\mathcal S$ states to process an input of length  $|m^+|$ and thus recognize $m^+ = d + m$ as a member of a 
given regular language (or not) is $\mathcal O(|\mathcal S| |m^+|)$.  This is because
each symbol in $m$ can induce a transition to one of a fixed, finite number
of states upon input and there can be at most $|S|$ additional $\Lambda$ ({\it i.e.} spontaneous) transitions per input symbol.
Note that this is true whether or not $m^+ \in \mathcal U^+$.  Note also that Theorem 4.2.1 shows that 
$\Phi_{\Delta} (\mathcal U^+; \Omega)$ is bounded, independent of $|m^+|$, thus validating the above claim that 
$\mathbb R \to 0$ as $|m^+|$ gets large.
\\

{\it Context Free Languages}\\

As we see from Table 1, recognizing a word in a given context free langugage requires parsing the input to a non-deterministic pushdown automaton, such as $\mathfrak P$, as described in Appendix B.  
Doing this involves the multiplication of boolean matricies of dimension $|m^+|$ (See, for example, \cite{ValiantParsing}).  Multiplying such 
 matrices would appear to require a running time of $\mathcal O(|m^+|^3)$; however, \cite{ValiantParsing}) notes that the asymptotic time complexity of any matrix multiplication can be reduced by cleverly dividing the
 matricies into blocks before multiplying them.  \cite{ValiantParsing} was written in 1974, when the best asymptotic time complexity for such multiplications was $\mathcal O(|m^+|^{\log_2 7}) \approx 
 \mathcal O(|m^+|^{2.81})$.  However, these algorithms have steadily improved since then; as of this writing, the best 
 asymptotic time complexity for such multiplications is approximately $\mathcal O(|m^+|^{2.37})$.  Whether further improvement is possible is a major open question in computer science.  Nevertheless, any 
 algorithm for multiplying two $|m^+| \times |m^+|$ matricies must process all of the entries in the given matricies, so recognizing a word of length $|m^+|$ must necessarily require a running time of 
 at least $\mathcal O(|m^+|^2)$.\\

{\it Context Sensitive Languages} \\

Recognition of words in a context sensitive language is known to be $\mathtt{PSPACE}$ complete \cite{PSPACE}
\footnote{Such problems require a memory size that grows at most as a polynomial in the length of the input, and every
other problem that can be solved using such a bound can be transformed to the given problem with a computation 
requiring at most a number of steps that is bounded by a polynomial in the length of the input.}.  
$\mathtt{PSPACE}$ complete problems include {\it all} problems in $\mathtt{NP}$ \cite{PSPACE}, 
problems whose solutions can be verified, albeit not necessarily
solved in a number of steps that can also grow no faster than a polynomial in $|m^+|$.  
Whether $\mathtt{NP}$ problems can actually be  
solved in a number of steps that can also grow no faster than a polynomial in $|m^+|$ is the content of the famous $\mathtt{NP} \neq \mathtt{P}$ conjecture.  If this conjecture proves to be true, as most experts
believe, then the recognition of words in a context sensitive language,
and thus the running time required to parse the pragmatic information in $m$ grows faster than {\it any} polynomial in $|m^+|$.\\

{\it General Recursively Enumerable Languages} ($\mathfrak T$)\\

Because of the halting problem, the parsing time required to parse an arbitrary word in a general recursively enumerable language is, in general, 
unknowable in advance, implying that it could be arbitrarily large.  This also implies that there can be an arbitrarily long 
waiting time to determine whether or not the receiver can understand the message at all.\\

In summary, it is only at the lowest level of the hierarchy, that of regular grammars, that 
conforming messages can be processed in real time, since messages conforming to higher levels in the hierarchy cannot, in 
general, be processed in $\mathcal O(|m^+|)$ time.  Thus, a pragmatic information {\it rate} can only be defined when $m$ is 
a word in a regular language.  This difficulty in processing messages 
in their full complexity has led to various heuristics to process special cases, as we illustrate in our discussion of the efficient 
market hypothesis in the next section.  We also note the fundamental tradeoff implied by the results of this and the preceeding 
section:  the additional computational power gained as we ascend the Chomsky hierarchy comes at the cost of dramatically 
decreased processing times.

\section{Implications for the Efficient Market Hypothesis of Financial Economics}

The {\it efficient market hypothesis} or EMH, in the words of one of its leading exponents, Nobel Laureate Eugene Fama \cite{Fama70, Fama91}, makes the claim 
that ``securit[ies] prices reflect all available information''.  Discussions of this hypothesis usually include the additional claim that 
securities prices reflect their ``fundamental value'' \cite{NoFreeLunch}, which is the discounted sum of the expectation of future cashflows accruing 
to owners of these securities and are thus in market clearing equilbrium with all other prices in the economy.  Fama identified 3 
versions of the EMH, for three different possible sets of ``available information'', namely
\begin{itemize}
\item the weak form of the hypothesis, which asserts that all information in previous securities prices are fully reflected in current prices,
\item the semi-strong form of the hypothesis, which asserts that, not just previous prices, but {\it all public information}, 
        from sun spot cycles to the details of geopolitics, are fully reflected in current prices,
\item the strong form of the hypothesis, which asserts that, not just all public information, but also 
{\it all insider information}, such as yet-to-be announced quarterly earnings, 
        is fully reflected in current prices.  
\end{itemize}

For much of our discussion, we will be interested in the information in the price series itself; we will therefore focus on weak 
form efficiency, which is most directly addressed by the celebrated {\it tâtonnement}\footnote{The French word 
{\it tâtonnement} is often translated as ``groping'' or ``trial and error''.}
 process of price formation \cite{groping}.  This process posits a hypothetical auctioneer who calls out price suggestions, 
aggregates supply and demand at each suggestion, and revises price suggestions accordingly until
supply matches demand.  Only these market clearing, equilibrium prices are the ones at which actual transactions take place, and it is the information in these that are supposed 
to be reflected in current prices.\\

The {\it tâtonnement} process differs significantly from the price discovery process of actual securities exchanges, such as the New York Stock Exchange (NYSE).  
Such exchanges usually function as so-called double auction markets.  As we will see in the more 
complete discussion of double auction markets below, the prices quoted in such markets are binding, even though there is no guarantee that these prices are market clearing, 
equilibrium prices.  At best, therefore, price information is only an estimate of where the true market equilibrium lies, suggesting that this equilbrium price should 
be treated as a random variable, much as \cite{EfficientImpossibility} proposed.\\

It is at this point that the significance of the pragmatic information paradigm becomes evident.  Market participants must 
make a decision, in this case the decision to buy, sell, or hold a particular security at a particular 
price, based on their estimate of a random variable, in this case the true equilibrium price.  In this context, price information 
becomes relevant only to the degree that it allows the market participant to update their estimate of this random variable.  
We therefore recall the proposal in \cite{Weinberger24} 
that the EMH be recast as the assertion that a market is efficient 
{\it to a given market participant} if none of the information available to that market 
participant is pragmatically useful, as defined above.  One immediate benefit of this recasting is that it resolves 
a major problem with the existing theory of efficient markets:  although the same market information is available to all investors (absent insider information), 
the outstanding performance of a few investors makes a mockery of market averages.  Consider, for example, 
the Renaissance Medallion Fund, started by the legendary investor Jim Simons, that generated average annual returns of 
39.9\% per year from 1989 to 2022, {\it net of fees}, with only one of those years (2022) 
generating less than 20\%.  We can
only conclude from such a track record that market information is more ``available'' to some investors than others.  
\\

Previous sections of this paper suggest that the pragmatic utility of market information depends on
the computational resources of these market participants.  The purpose of this section is to elaborate on this theme for 
financial markets.  We therefore distinguish between markets that are  
\begin{itemize}
\item {\it finite state efficient} if no finite state machine can succeed in the sense of Section 4, above on the relevant information set, which would require that market returns are finite state random,
\item {\it pushdown efficient} if no pushdown automaton can succeed on the relevant information set, which would require that market returns are pushdown automaton random,
\item {\it linearly bounded efficient} if no linear bounded automaton can succeed on the relevant information set, which would require that market returns are linearly bounded automaton random,and
\item {\it Turing efficient} if no general purpose Turing machine can succeed on the relevant information set, which would require that market returns are Martin-L\"of Random.
\end{itemize}
Recall 
that each of the automata classes in the Chomsky hierarchy are strictly less powerful than automata in the classes above
them.  Hence, markets efficient with respect to a given automata class are subsets of markets that are efficient 
with respect to the less powerful automata below them in the hierarchy.  The above discussion shows that this subset relation 
is sometimes strict.\\

Our claim that computational issues are relevant to the weak form of market efficiency was previously considered by 
\cite{Maymin} and \cite{ComputationalView}.  Indeed, our notions of finite state efficiency, pushdown efficiency,  
etc. are simply elaborations of the notion in \cite{ComputationalView} of efficiency with respect to a resource. 
However, we differ from this previous work in two ways:
\begin{enumerate}
\item \cite{Maymin} and \cite{ComputationalView} treat the problem of ``beating the market'' as merely a recognition problem, namely that of
identifying appropriate entry points for trades.  In fact, trading profits are not realized until a trade is ``closed out'' 
by selling long positions and/or covering short positions.  This observation suggests that the relevant problem in the 
theory of formal languages 
is not the recognition problem, but the reachability problem, which asks, given the current state of the market, will a given final 
state be achieved?  
\item\cite{Maymin} asserts in its title that``Markets are efficient if and only if $\mathtt{P} = \mathtt{NP}$''.  We make a 
different claim, namely that any attempt to use ``all available [market] information'' is intrinsically $\mathtt{PSPACE}$ complete.
As we noted in a previous footnote, $\mathtt{PSPACE}$ compleness is thought to be an even greater level of intractability than 
$\mathtt{NP\ }$ completeness.  Thus, market participants, stymied by the sheer 
volume of requisite computation, cannot extract pragmatic information from previous prices because of what we might 
term {\it computational efficiency}.  
\end{enumerate}
We present our discussion of computational efficiency in the remainder of 
this section, prefacing it with a brief account of market mechanics and a stylized account of how 
pragmatic information characterizes a market's approach to computational efficiency.  We focus on the 
approach to finite state efficiency, rather than the linear bounded efficiency which we believe to be the 
true limiting case.  We justify this focus by our belief that the simpler finite state case illustrates the more 
general case and because the approach to finite state efficiency is closer to current market conditions.\\

\subsection{A Simplified Description of Market Mechanics and of the Computation of Market Returns}
In the standard continuous double auction market 
\cite{SFIonDoubleAuction} typified by the New York Stock Exchange\footnote{
While the description of market microstructure presented here only applies in detail to organized exchanges,
the exchange merely formalizes the supply and demand dynamics of securities traded outside of exchanges.}, 
software known as a matching engine brings together buyers and sellers.  Some of these market 
participants stand ready to buy pre-specified 
amounts of a security at a price less than or equal to a given bid, and others stand ready to sell pre-specified 
amounts of the security at a price greater or equal to a given offer.  A record of these so-called {\it limit orders} is maintained by the exchange's matching engine 
in a {\it limit order book} or LOB.  Alternatively,  a buyer or seller can place an order to buy or sell a specific amount of a security 
``at the market'' (buy at best offer or sell at best bid).  Absent such market orders, the bids in the LOB will be strictly less than the offers; at such times, absent the modification
or cancellation of an existing limit order, nothing happens and the LOB remains unchanged.   Thus, a transaction takes place only when 
\begin{itemize}
\item  a buyer or seller places a market order, 
\item a buyer or seller places a {\it marketable limit order} for a specific amount of the security that meets existing bids or offers
\end{itemize}
In such cases, the filled limit orders are removed from the LOB.
This simplified description of market mechanics is almost sufficient for our subsequent discussion; the 
one addition required is the so-called ``immediate or cancel'' (IOC) limit order.  
The so called resting limit orders of the previous paragraph remain in force until either they are filled, 
possibly via a series of partial fills, 
or they 
are explicitly modified or cancelled.  In contrast to resting limit orders, IOC orders are 
cancelled if they cannot be executed in full immediately upon receipt.\\

This market structure implies an intrinsic uncertaintly as to the exact price 
of the security most of the time, partly because, most of the time, the best bid is
strictly lower than the best offer.  Will it be a buyer or will it be a seller that will prove to be the more motivated to ``cross the
bid/offer spread''?  Only time will tell.  
And even
when there is a transaction, there is an agreement as to the value of a specific amount of the security {\it at that instant}.  
The price for a different amount of the security may well be different because, for example, a large market 
buy order  could be filled by several different sellers offering smaller amounts of the security at different prices.  Additional 
uncertainty is introduced by ``bid-offer bounce'' or a series of transactions 
that alternate between eager buyers trading at the unchanging best offer and eager sellers trading at the unchanging best bid. 
Then there are other market practices that obscure large limit orders, such as ``dark pools'' 
(outside-of-exchange order matching) and ``iceberg orders'', which are a series of orders that are automatically repeated as 
soon as the previous one is filled, thus concealing their true aggregate size.  Typically, the LOB is visible to market participants, 
but dark pools and iceberg orders are, in their totality,
invisible by definition.  Yet such orders account for a significant fraction of trading activity.  For example, 
{\it The Wall Street Journal} \cite{WSJ} reported that ``On some days [in 2021], 
more than half of shares changing hands in U.S. are traded outside public stock exchanges''.  Also, 
\cite{Iceberg} reported that, consistent with previous estimates, iceberg orders comprised more than 25\% of the volume of the E-mini S\&P 500 
September 2019 Futures Contract traded on the Chicago Mercantile Exchange on June 19, 2019.  \\

The difficulty in making any precise claim about the ill-defined notion of``securities prices'' confirms our view of true 
equilibrium prices as random variables.  To some degree, such difficulties 
are mitigated by averaging the transaction prices, perhaps weighted by transaction amounts, over some short time period, such as 1 or 5 
minutes.  Nevertheless, the failure of \cite{Maymin} to take market structure into consideration remains, and, as
we will see, including it can lead to significantly different conclusions.

\subsection{A Sketch of How Pragmatic Information Characterizes the Approach to Finite State Efficiency}
We base our analysis on a variant of a model due to \cite{ComputationalView}.  In this model, zero-knowledge market 
participants trade a single security, oblivious to a deterministic, periodic pattern in the dollar prices, $P_t$, of the security 
at successive time periods.  Suppose, 
in particular, that the period is reflected in daily closing prices, such as the following periodic sequence
$$
6, 5, 6, 5, 6, 5, 6, 5, 6, 7, \quad 6, 5, 6, 5, 6, 5, 6, 5, 6, 7, \quad 6, 5, 6, 5, 6, 5, 6, 5, 6, 7, \ldots,
$$
where extra spaces are inserted in the above to emphasize the periodicity\footnote{\cite{ComputationalView} notes that this model is not as contrived as it might seem, as there are indeed many sources of periodic predictability in the markets.  To the list of these provided by \cite{ComputationalView}, we add predictabile short term interest rate fluctuations near the beginning of each year and the periodic expiry of ``front month'' futures contracts.}.  After a certain number of such periodic fluctuations, 
a given market participant, whom we denote as $\Delta_1$,  
realizes that profits can often be made by observing the most recent recent price difference, 
$D_t = P_t - P_{t-1}$ and making $\mathtt{BUY/SELL}$ decisions accordingly\footnote{Strictly speaking, $\Delta_1$ cannot simultaneously observe a closing price and execute trades at that price.  Nevertheless,
prices of heavily traded securities rarely fluctuate substantially during the near instantaneous time required to execute a market order, or even reverse a position; thus, the result of observing prices and implementing 
the proposed strategy immediately before the close would closely approximate the results described here.}.  We might further suppose that the probability that $\Delta_1$ makes such decisions  
increases as $D_t$ increases in magnitude, perhaps according to the formula
$$
{\rm Pr}\left\{\mathtt{BUY\ |\ D_t}\right\} = \frac{1}{1 + e^{\kappa D_t}} 
                                                              = 1 - {\rm Pr}\left\{\mathtt{SELL\ |\ D_t}\right\},
$$
with the probabilities taken to be independent at each daily time step, and with $\kappa$ viewed as a measure of how useful 
$D_t$ is in predicting the next price.  This contrasts with the zero-knowledge market participants, for whom 
$$
{\rm Pr}\left\{\mathtt{BUY}\right\} = \frac{1}{2} = 1 - {\rm Pr}\left\{\mathtt{SELL}\right\}.
$$

If $\Delta_1$ takes $\kappa=3$, a low price will generate a $\mathtt{BUY}$ decision about 95.3\% of the time,  
and a high price will generate a $\mathtt{SELL}$ decision about 95.3\% of the time.  Because the only 
``messages'' are the price differences at each time step, which are always known with certainty to be $\pm 1$, 
we have either $\varphi_{D_t = 1} = 1, \varphi_{D_t = -1} = 0$ or 
$\varphi_{D_t = 1} = 0, \varphi_{D_t = -1} = 1$.  In the former case, the pragmatic 
information is then 
\begin{align*}
\Phi_{\Delta_1} (\left\{+1, -1\right\}; \left\{{\mathtt{BUY}, \mathtt{SELL}}\right\}) &= 
   1 \times \left[ p_{\mathtt{BUY}\ |\ D_t =+1} \log\left(\frac{p_{\mathtt{BUY}\ |\ D_t =+1}}{1/2}\right) + 
    p_{\mathtt{SELL}\ |\ D_t =} \log\left(\frac{p_{\mathtt{SELL}\ |\ D_t =+1|+1}}{1/2}\right) \right] + \cr
& \hspace{0.5 cm} 0 \times \left[p_{\mathtt{BUY}\ |\ D_t =-1} \log\left(\frac{p_{\mathtt{BUY}\ |\ D_t =-1}}{1/2}\right) + 
   p_{\mathtt{SELL}\ |\ D_t =-1} \log\left(\frac{p_{\mathtt{SELL}\ |\ D_t =-1}}{1/2}\right) \right];\cr
\end{align*}
in the latter case, 

\begin{align*}
\Phi_{\Delta_1} (\left\{+1, -1\right\}; \left\{{\mathtt{BUY}, \mathtt{SELL}}\right\}) &= 
   0 \times \left[ p_{\mathtt{BUY}\ |\ D_t =+1} \log\left(\frac{p_{\mathtt{BUY}\ |\ D_t =+1}}{1/2}\right) + 
    p_{\mathtt{SELL}\ |\ D_t =+1} \log\left(\frac{p_{\mathtt{SELL}\ |\ D_t =+1}}{1/2}\right) \right] + \cr
& \hspace{0.5 cm} 1 \times \left[p_{\mathtt{BUY}\ |\ D_t =-1} \log\left(\frac{p_{\mathtt{BUY}\ |\ D_t =-1}}{1/2}\right) + 
   p_{\mathtt{SELL}\ |\ D_t =-1} \log\left(\frac{p_{\mathtt{SELL}\ |\ D_t =-1}}{1/2}\right) \right],\cr
\end{align*}
both of which evaluate to about 0.72 bits per time step.\\

When $\Delta_1$ makes the right decision, which it does much of the time, all of the 0.72 bits of pragmatic 
information gleaned from knowing $D_t$ at each time step constitute pragmatically useful information.  
However, after the sequence of prices 6, 5, 6, 5, 6, 5, 6 at the beginning of the period, $\Delta_1$ will 
make a $\mathtt{SELL}$ decision when 
a $\mathtt{BUY}$ decision would have been more profitable.  The knowledge of the price change leading to that 
decision is therefore pragmatic noise, a conclusion that would escape an analysis based on Shannon 
information.\\

After a while, other market participants notice $\Delta_1$'s profitability, and they decide to do what $\Delta_1$ is doing, 
albeit after $\Delta_1$'s trades are executed.  This time lag prevents these other participants from executing at the same 
prices as $\Delta_1$ because 
$\Delta_1$'s trades have depleted the finite number of shares bid/offered at these prices.  Thus, market participants
such as $\Delta_2$ that
are only able to execute after $\Delta_1$ might see the periodic dollar prices 
$$
6.1, 5.1, 5.9, 5.1, 5.9, 5.1, 5.9, 5.1, 5.9, 6.9, \quad 6.1, 5.1, 5.9, 5.1, 5.9, 5.1, 5.9, 5.1, 5.9, 6.9, \ldots .
$$  
Assuming that $\Delta_2$ adopts the same decision rule and prior probabilities as $\Delta_1$, 
the corresponding  (pragmatically useful) pragmatic information is only 0.59 bits per time step, 
except for the end of the cycle and the second time step of the next cycle.  For these time steps, $\Delta_2$ will again 
make a $\mathtt{SELL}$ decision when 
a $\mathtt{BUY}$ decision would have been more profitable.  The knowledge of the price change leading to that 
decision is therefore, once again, pragmatic noise.\\

We might imagine a series of market participants, similarly endowed with the same decision making process as
$\Delta_1$, each moving the market with their trades.  The price 
differences per time step will tend to zero when there is positive pragmatically useful information, but they will remain 
when there is pragmatic noise .  We might then characterize such a market as being ``$D_t$ efficient''.  However, a finite state 
machine that can detect the entire pattern and respond appropriately can clearly remain profitable even 
in a $D_t$ efficient market.  Profits from that more powerful finite state machine will, nevertheless, also degrade as more market 
participants utilize that more powerful automaton.\\

Much actual trading does seem to be informed by the simple trading rules of ``technical analysis'' such as trend following 
and mean reversion \cite{Murphy}, which are easily implemented by finite state machines.  Many billions of dollars
in profits continue to be earned via these finite-state-machine-based strategies, along 
with other finite-state-machine-based strategies, such as index arbitrage and the triangular currency 
arbitrage\footnote{This arbitrage arises when, for example, exchanging USD for Japanese Yen 
and then exchanging the Yen for Euro generates a different quantity of Euro than tha direct exchange of 
U.S. Dollars for Euro.}.  It follows that actual financial markets are,
 at least sometimes, not finite state efficient, let alone Turing efficient.\\
   
Yet the opportunities for exploiting such inefficiencies are becoming ever more fleeting as 
the computer systems that exploit them get faster and more 
sophisticated, suggesting that markets are indeed followinig the path towards finite state efficiency described
above.  The triangular currency arbitrage, in particular, provided the present author a first-hand illustration of the move towards 
finite state efficiency.  When he began working as a quant at HSBC in 1992, he learned that a trader had been assigned the full time job of 
identifying and exploiting these triangular arbitrages.  However, some months later, the trader had to be reassigned because 
the arbitrages were no longer regularly exploitable by HSBC's relatively high latency setup.  
Nevertheless, ultra-low latency computer systems are sometimes able to exploit this arbitrage to 
this day.\\

\subsection{But True Market Efficiency is More Than Finite State Efficiency!}

There is considerable evidence that more powerful models than 
finite state machines are sometimes needed to describe market dynamics.  For example, let $P_t$ be the price of some security at time $t$ and 
$r_{t, \delta t} = \ln\left(\frac{P_{t+\delta t}}{P_t}\right)$.  Successive values of $r_{t, \delta t}$ are well known to be uncorrelated for all values of $\delta t$; however, successive values of $r_{t, \delta t}$
cannot be independent, because $|r_{t, \delta t}|$ is autocorrelated.   
If $r_{t, \delta t}$ were a non-deterministic regular language, the behavior of its recognizing non-deterministic 
finite state machine 
would be completely characterized by a finite dimensional transition matrix 
of probabilities of state transitions.  For such models, the autocorrelations of states separated by time $t$ decay 
exponentially with $t$, with the (unique) time scale of the decay determined by the largest eigenvalue of the transition matrix
\cite{Tegmark}.  
In contrast, per \cite{Multifractal}, a central ``stylized fact'' about asset returns, which must be linked to states in any putative
finite state model, is that $|r_{t, \delta t}|$ has temporal correlations that decay as a power law in $\delta t$, implying the absence 
of a single decay rate.  
\\

These observations should not be surprising for several reasons.  First, a defining characteristic of financial markets is that market participants 
are active across all time scales, from sub-millisecond statistical arbitrageurs to Warren Buffet's famous ``buy and hold'' strategy, 
which he had sometimes followed for decades.  And market participants are certainly cognizant of even longer time horizons, 
such as the Dutch Tulip Mania of the early 1600's.  
Furthermore, as we have seen, a more powerful gambler can succeed on a 
sequence in which a finite state gambler does not, so it is possible for a market to be finite state efficient, but not efficient with 
respect to more powerful computational models.  The absence of a single decay rate for the above correlations implies the 
need for such models.  Also, the use of pushdown automata or linear bounded automata in market models allows for much more generality in the market 
patterns that can be recognized.  For example, just as a stack is necessary to recognize the recursive 
nature of a string of 
arbitrarily nested, balanced parentheses, so, too, is a stack useful in recognizing trends within trends.  More generally, a stack can help to differentiate the same pattern in different regimes, given their 
different meanings in bull and bear markets.\\

An investor hoping to incorporate a more elaborate computational framework than that provided by 
a finite state machine faces several distinct sources of computational intractability:
\begin{enumerate}
\item A profitable close-out can only be effected by a 
profitable limit or market order, which means that the contents of the limit order book 
must also be considered.  The state of a market must therefore include, not just the price of the last transaction, but also the
entire LOB.  And that's not all, because, as noted above, iceberg orders comprise a sigificant fraction of executed orders, 
implying that the ``private'' limit order books comprising all of the iceberg orders for a given security should be considered as
well.  Then there is the further complication that orders are often entered as part of a larger strategy, such as index arbitrage,
which involves trading all of the securities that comprise the index.  It follows that the 
complete characterization of a market state requires the concatenation of the private limit order books of 
\underline{all} securities traded in the market, and the relevant state transition is the change in state 
triggered by a newly received order, modification of an existing order, or cancellation of an order, whether it results in a trade and corresponding price update, or not.  We conclude that the quantity $|m^+|$ of the 
previous section is likely to be quite large.  This is definitely a problem for any of the automata whose running time scales super-
linearly with $|m^+|$.

\item The production rules that generated the above described market states must be
determined from the sequence of previous market states, rather than given in advance.
This problem is known as {\it the grammar induction problem} in theoretical 
computer science, and it is known, in the general case, to be $\mathtt{NP}$-complete in the 
length of the data on which the system is being trained \cite{Gold}.  Given the fact that markets trade differently 
during different market regimes, most famously bull and bear markets, years of market data would be required for such 
training.

\item Markets respond to a wide variety of news items and other events, making them non-stationary, so any  
preparatory processing, such 
as the solution to the grammar induction problem, must be completed before there are substantial changes to the market. 

\item Most exchanges allow traders to view the LOB.  However, by definition, it is impossible to view iceberg orders, implying that
only a partial view of the full state is available at any given time.  Constructs where the next state depends only on the 
current state, with decisions made based only on such partial views are known as {\it partially observable 
Markov decision processes}, often by the acronym POMDP.  In general, the reachability problem for such systems is 
$\mathtt{PSPACE}$-complete \cite{POMDB} in the length of the description of the problem, in this case the length of the 
description of the joint LOB's of all traded securities multiplied by the number of units of each security that can be traded. 
The reachability problem for the LOB is even harder, because the presence of market trends implies that the state of the LOB depends, 
not just on the current state, but also on previous states. 
\end{enumerate}

As a result of the above, we conclude that the relevant reachability problems are at least $\mathtt{PSPACE}$-complete, a 
conclusion consistent with the result in \cite{Markomata} that order processing in double auction markets 
requires at least the equivalent of a bounded linear automaton.  .  It is 
widely conjectured, though not known for sure, that $\mathtt{PSPACE}$-complete problems are strictly harder than
$\mathtt{NP}$-complete problems.\footnote{If so, the $\mathtt{PSPACE}$-completeness of reachability problems invalidates 
part of the claim in the title of \cite{Maymin}, namely that markets are efficient only if $\mathtt{P} =\mathtt{NP}$.  Even if 
$\mathtt{P} =\mathtt{NP}$, it is still possible that $\mathtt{P} \neq \mathtt{PSPACE}$.  If so, market inefficiencies cannot 
necessarily be found and arbitraged away as easily as the argument in \cite{Maymin} requires.}\\

Furthermore, under a variety of 
limiting cases, reachability problems in double auction markets can+ become undecidable, {\it i.e} potentially unsolvable 
even with infinite time and computing ability.  For example, note that 
a double auction market is a concurrent system, which is a network of agents 
({it i.e.} 
market participants) and communication channels connecting them ({\it i.e.} the joint LOB of all securities being traded).  Even if market participants are modeled as 
simply as finite state machines, the joint reachability problem for all such agents is undecidable \cite{CommunicatingMachines} if 
the communication channels can have an unbounded amount of traffic.
\\

We consider another of these limiting cases in detail in Appendix C.  There, if we are willing to assume that a routing engine
\begin{enumerate}
\item supports an LOB that is allowed to hold limit orders for an arbitrarily large number of units,
\item an order routing queue that supports the placement of an arbitrarily long, fixed sequence of limit orders,
\item routing that is conditional on whether a previous order was executed at a given price,
\end{enumerate}
then the routing engine is Turing complete.  Although the third of the above assumptions does not hold for the order 
matching engines of most exchanges, it must hold for the smart order routing engines of those brokerage firms that support basket trading.
\\

Rice's Theorem \cite{Sipser} shows that, not only is the question of whether a Turing equivalent system eventually reaches a 
halt state undecidable, but many other questions about the evolution of the system are undecidable as well.  For our Turing 
equivalent market systems, this would include such questions as whether the system reaches a given 
level of profitability or a given market trend.  In the more realistic cases in which both the numbers and sizes of limit orders are
bounded, the system 
becomes a linear bounded automaton.  While questions such as whether the system will reach a certain state from a given 
set of inputs is decidable (but, in general, $\mathtt{PSPACE}$-complete), questions such as whether the system will reach a 
certain state, given {\it any} set of inputs, remain undecidable.
\\

The picture that emerges from all of this is a computational version of \cite{EfficientImpossibility}. That paper argued 
that prices cannot be perfectly aligned with all available information because, if they did, no one would bear the non-zero 
cost of gathering and processing the market information needed to maintain that alignment.  Instead,
 \cite{EfficientImpossibility} proposes that markets 
tend towards ``an equilibrium degree of disequilibrium'', where the marginal return on gathering/processing market information 
equals the marginal cost of such efforts.  Here, we have considered a specific kind of 
cost, namely the computational cost of processing market information.  Market participants vary widely in their ability 
to bear this cost.  Hence, even though markets do not seem to be finite state efficient, some participants will find a given 
market to be computationally efficient, yet others (such as Simons) will 
be able to exploit its inefficiencies.\\

Even if and when markets approach finite state efficiency, market participants with the most computational resources will 
have an edge in making the $\mathtt{PSPACE}$-complete calculations required to fully process pragmatically useful information
even under any conceivable increase in compute.
For all others, any truly comprehensive attempt to use ``all available information'' will be, as we suggested in the beginning of 
this section, 
stymied by the sheer volume of requisite computation, rendering such attempts computationally intractable.  For these others, if 
not for the computationally well endowed, markets will indeed be computationally efficient.\\

\section{Directions for Future Research}

As noted in the introduction, we have defined pragmatic information in the simplest possible setting.  While \cite{Weinberger24} argued for the present definition of 
pragmatic information by considering the sample average of a large number of decisions, these decisions were assumed to be independent of each other and of the inputs to previous decisions. 
(As \cite{Weinberger24} observed however, the convergence of the average pragmatic information per decision to the present definition can result from a wide class of message sequences.)  The 
theory in its current state therefore remains silent about the situation where previous decisions and/or the inputs that prompted those decisions to influence the given decision, precisely the 
situation in which $\Delta$ learns from previous results.  Perhaps the theory could be generalized to assume that
the input messages $m$ includes a series of instructions that may affect $\Delta$'s subsequent decision making capabilities, perhaps by affecting internal states that $\Delta$ maintains.  For example, $m$ might be instructions to modify the source code in the implementation of $\Delta$.  The take-away from the present paper --- that the computational abilities of the receiver play a major role in determining how much 
pragmatic information the receiver can extract from a message --- no doubt applies in this more general setting as well.
\\

Another direction for future research is to explore the parallels between pragmatic information, the Shannon 
theory, and thermodynamic entropy, especially in relation to the EMH.  In the previous section, we applied the 
pragmatic information paradigm to market estimates of equilibrium prices, observing that these estimates vary in quality with 
the computational power of the receiver.  Here, we note that such estimates are also subject to the receiver's cognitive biases, 
such as loss aversion, that have been so thoroughly documented in the behavioral finance literature (See, for example, 
\cite{ProspectTheory}).  The pragmatic information paradigm suggests that such biases are a kind of pragmatic noise, 
preventing 
a receiver from fully processing market information.  Given that there is an extensive theory of dealing with noise in the Shannon
theory ({\it i.e.} coding theory), is there something along the same lines within the pragmatic information paradigm?
\\

Bubbles and crashes might also usefully be viewed through the lens of pragmatic information.  There is certainly much 
pragmatic noise bandied about at those times!  A related idea, per \cite{Markomata}, is that there 
is an inherent tendency towards a ``rise in complexity of the entire market system'', which implies that more processing 
power is needed to separate pragmatically useful information from pragmatic noise in such cases.  When this additional 
processing power is not available, we might expect an increase in pragmatic noise, as in the example in Section 4.5.  Per 
\cite{ComputationalView}, such pragmatic noise can, itself, trigger a bubble.\\

Finally, we recall the observation that pragmatic information can be interpreted as a kind of ``free information'':  much as the 
free energy is the portion of the total energy of a physical system available to do useful work, so pragmatic information is 
the part of the total information available make (potentially) useful decisions.  Yet pragmatic information is certainly a kind of 
information measure, and information is more closely linked to thermodynamic entropy, the derivative of free energy with 
respect to temperature.  It would be worth understanding this parallel more clearly:  just as non-smooth changes in
physical entropy that characterize physical phase transitions, so non-smooth changes in pragmatic information rates might 
help understand abrupt changes in the information processing abilities of a receiver, including the collective behavior of 
market participants, as reflected in dramatic changes in pragmatic noise.
\\  

\section*{Acknowledgements}
The author would like to acknowledge Profs. Harald Atmanspacher and Herbert Scheingraber for organizing the NATO Advanced Study Institute on Information Dynamics in 1991, where the author first learned of the 
need for a theory of pragmatic information and New York University for a partial stipend to attend the Santa Fe Institute
Conference on Collective Intelligence that encouraged him to complete this work.\\

The author would also like to acknowledge the role of the various versions of the large language models Google Gemini and 
Anthropic Claude in 
\begin{itemize}
\item gathering relevant literature references, 
\item sharpening intuitions leading to the 
discussion of gamblers in the text and their connection to pragmatic information,
\item supplying the argument that identified a sequence, $\mathscr S$, for which 
$\mathrm{dim}_\mathfrak P(\mathscr S) > \mathrm{dim}_\mathfrak T(\mathscr S)$,
\item highlighting 
the additional computational complexity introduced into market dynamics by the limit order book, and
\item suggesting that the processing surrounding the limit order book be modeled by a Minsky machine,
\item understanding the implications of Rice's theorem on order book processing, and
\item checking the manuscript for correctness.  While Claude Sonnet confirmed the basic soundness of the arguments as 
written by the author, it found a few typos and minor gaps in the proofs as written.
\end{itemize}
It goes without saying that the author has carefully checked the line-by-line correctness of these contributions.

\appendix
\appendixpage
\parindent 0px
\section{The Chomsky hierarchy of formal languages}
\parindent 0px
Given the finite alphabet $\mathcal A$ and strings of symbols that are elements of $\mathcal A^*$, as described in the body of this paper, 
the formal language $\mathcal L$ is a subset of $\mathcal A^*$, with membership in $\mathcal L$ determined by a set of 
rules called a formal grammar for $\mathcal L$.  The {\it membership problem} is then the problem of determining whether a given element of
$\mathcal A^*$ is, in fact a member of $\mathcal L$.  Strings in $\mathcal L$ are sometimes called ``words'', even though the 
theory of formal languages assigns no meaning to such strings {\it per se}.\\

A grammar for $\mathcal L$ specifies which elements of $\mathcal A^*$ are members of $\mathcal L$.  Such grammars 
are often characterized via {\it production rules},
which are sets of transformations that map strings to other strings.  
Besides elements of  $\mathcal A^*$, production rules can contain both 
elements of a finite set, $\mathcal N$, of so-called non-terminal symbols, or even elements of 
$\mathcal N^*$, which are strings of length zero or more of the symbols in $\mathcal N$.  
The first string produced by the rules is the single ``start'' symbol $\Sigma \in \mathcal N$.  The production 
rules then specify how $\Sigma$ and the symbols appearing in the replacements of $\Sigma$ are to be replaced by other strings.  The rules are successively applied to the strings resulting from the results of previous rule applications until only 
elements of  $\mathcal A^*$ remain (which is why the symbols in $\mathcal A$ are sometimes called
terminal symbols, in contrast to the non-terminal symbols in $\mathcal N$.).   Since the grammar may 
specify that multiple rules may be 
applied to a given string containing non-terminal symbols, many, and perhaps infinitely many elements of $\mathcal A^*$ can be the final result.   
Note that, in order for these substitutions to be well-defined, $\mathcal N$ must be disjoint from $\mathcal A^*$.  
The {\it membership problem} for a given $w \in \mathcal A^*$  and 
a given language  $\mathcal L$ is then the task of determining whether $w$ can be reproduced by the production rules 
that characterize $\mathcal L$'s grammar, starting from the start symbol, $\Sigma$. The {\it reachability problem} asks the 
same question problem, but starting from a given $w^0 \in \mathcal L$, with $w^0 \neq \Sigma$.\\


Given that $\alpha \in \mathcal A$ and $\nu, \nu' \in \mathcal N$, with $\nu'$ not necessarily distinct from $\nu$, 
the Chomsky hierarchy identifies the languages produced by four cases of such rules, namely
\begin{itemize}
\item {\bf regular languages}, for which the rules must have one of the following forms:
\begin{align*}
\nu &\rightarrow \Lambda \\
\nu &\rightarrow \ \alpha \\
\nu &\rightarrow \ \alpha \ \nu' \\
\nu &\rightarrow \ \nu' \\
\end{align*}
{\it Example}:  
\begin{itemize}[label=\textopenbullet	]
\item $\alpha \in \mathcal A =\{a, b\}$ \\
\item $\nu \in \mathcal N =\{\Sigma, N$ \}\\
\item $\mathcal L = \{w \in \mathcal A^*| w$ contains an even number of $a$'s\}.
\end{itemize}

All such words $w$ can be reproduced by the following production rules \cite{ChatGPT}:
\begin{align}
\Sigma &\rightarrow \Lambda \\
\Sigma &\rightarrow b\Sigma \\
\Sigma &\rightarrow aN		\\
N         &\rightarrow bN		\\
N         &\rightarrow a \Sigma 
\end{align}
Note that $\Sigma$ represents a state where the string has an even number of $a$'s, so adding a $b$ via 
rule (A.2) or terminating the substitution process via rule (A.1) keeps the number of $a$'s even.  Note also that $N$ represents a state where the string has an odd number of $a$'s, a state that is entered by invoking rule (A.3).  However, the substitution process cannot terminate here because of the non-terminal $N$ that now appears in the string.  The string continues to have an odd number of $a$'s after invoking rule (A.4), but, again, the substitution process cannot terminate here.  In fact, the only way the substitution process can terminate after $N$ appears in the string is via rule (A.5), which guarantees that  the number of $a$'s in the terminal string will be even.
\\																																																							
\\
\item {\bf context free languages}, for which the rules must have the following form:
\begin{align*}
\nu &\rightarrow \ \Big(\mathcal A \cup \mathcal N\Big)^*, 
\end{align*}
{\it Example}:  A language that is context free, but not regular.    
\begin{itemize}[label=\textopenbullet	]'
\item $\alpha \in \mathcal A =\{\  ( , \ )\}$ \\
\item $\nu \in \mathcal N =  \{\Sigma\}$ ({\it i.e.} $\Sigma$ is the only non-terminal symbol)\\
\item $\mathcal L = \{w \in \mathcal A^*| w$ has balanced parentheses $\}$
\end{itemize}
All such words $w$ can be reproduced by the following production rules \cite{AHU}:
\begin{align}
\Sigma &\rightarrow \Sigma \Sigma \\
\Sigma &\rightarrow ( \Sigma ) \\
\Sigma &\rightarrow () 
\end{align}
Note that the final result necessarily has balanced parentheses because only balanced parentheses appear in the string as it is being formed.
\\
\item {\bf context sensitive languages}, for which the rules must have one of the following forms:
\begin{align*}
\Sigma &\rightarrow \Lambda \\
A\ \nu\ B &\rightarrow \ A\ C\ B
\end{align*}
where $A, B \in \Big(\mathcal A \cup \mathcal N \setminus \{ \Sigma \} \Big)^*$ and $C \in \Big(\mathcal A \cup \mathcal N \setminus \{ \Sigma \} \Big)^+$.  Such grammars are context sensitive because
$A$ and $B$ provide a context to $\nu$, as the replacement of $\nu$ by $C$ is permissible only within that context.  In contrast, no such context is required for 
the non-terminal on the left side of the production rules for a context free grammar.  Because the string $C$ is not allowed to be empty, the output of every production rule will be a string that is at least as
long as the input. \\
\\
\\
{\it Example:}  A language that is context sensitive, but not context free.
\begin{itemize}[label=\textopenbullet	]
\item $\alpha \in \mathcal A =\{a, b, c\}$ \\
\item $\nu \in \mathcal N = \{B, C, W, Z\}$ \\
\item $\mathcal L = \{w \in \mathcal A^*| w = a^n b^n c^n, n \geq 1 \}$.  In other words, $w$ must begin with at least one $a$, followed by the same number of $b$'s, followed by the same number of $c$'s.
\end{itemize}
All such words $w$ can be reproduced by the following production rules \cite{AHU}:
\begin{align}
\Sigma &\rightarrow a B C \\
\Sigma &\rightarrow a \Sigma B C \\
CB &\rightarrow CZ \\
CZ &\rightarrow WZ \\
WZ &\rightarrow WC \\
WC &\rightarrow BC\\
aB &\rightarrow ab \\
bB &\rightarrow bb \\
bC &\rightarrow bc \\
cC &\rightarrow cc 
\end{align}
Rules (A.9) and (A.10) allow for the blowing up of the start string to $a^nBC(BC)^{n-1}$,  Rules (A.11) thru (A.14) replace  non-terminal strings $CB$ with non-terminal strings $BC$.  The remaining rules replace 
a non-terminal $B$ or $C$ with the corresponding terminal $b$ or $c$, respectively.  
\item {\bf recursively enumerable languages}, which allow for arbitrary transformations
$$
\eta \rightarrow \xi, 
$$
where $\eta, \xi \in \Big(\mathcal A \cup \mathcal N\Big)^*$.  The term ``recursively enumerable'' 
indicates that all valid strings of the language can be enumerated by a Turing machine.  An example of a language that is 
recursively enumerable, but not context sensitive is the {\it Post correspondence problem}.  One of several equivalent
statements of this problem is to think about a collection of tiles, each with a string $m_j^T \in \mathcal A^*$ on the top of the 
tile and another string $m_k^B$ on the bottom, as shown below, with $m_j^T$ consisting of the single symbol $\alpha_3$ 
and $m_k^B$ consisting of the concatenation of the two symbols $\alpha_7$ and $\alpha_5$ in the eight symbol alphabet
$\{\alpha_1, \ldots, \alpha_8\}$
$$
\begin{bmatrix}
\alpha_3 \\ 
\alpha_7 \alpha_5
\end{bmatrix}
$$
We are then given a set of such tiles, such as 
$$
\begin{bmatrix}
\alpha_2 \alpha_4 \\
\alpha_5 \alpha_4 \alpha_1
\end{bmatrix},
\begin{bmatrix}
\alpha_1 \alpha_7 \\
\alpha_2 \alpha_3 \alpha_8
\end{bmatrix},
\begin{bmatrix}
\alpha_2 \alpha_7 \\
\alpha_2 \alpha_3 \alpha_8
\end{bmatrix}
$$
The problem is to determine whether there is an arrangement of the tiles such that the concatenatation of the top strings 
in the order given by the arrangement is equal to the concatenation of the bottom strings, also as indicated by the arrangement.  
Note that such an arrangement is not possible in the example given above because each of the top strings is shorter than its 
corresponding bottom string, so the concatenation of the top strings will be always be shorter than the concatenation of the 
bottom strings.  However, no such shortcut can be found in the general case; hence this general case is undecidable.

\end{itemize}

It can be shown that only some context free languages are regular, only some context sensitive languages are context free,
only some context sensitive languages are context free, and only some recursively enumerable languages are 
context sensitive.
All of the above examples show that grammars can be {\it non-deterministic}, meaning that the same left side can appear more than once in the list of production rules.  
Grammars can also be {\it deterministic}, if the left side of the production rules can only
appear once in the list of production rules for the language.\\

\section{The abstract automata that recognize the languages in the Chomsky hierarchy}
Algorithms that can determine whether any given string in $\mathcal A^*$ is in a given language $\mathcal L$ are often called abstract automata.  When such an automaton can make such a determination
for given language $\mathcal L$, the automaton is said to recognize $\mathcal L$.  If $\mathcal L$ has a non-deterministic 
grammar, more than one sequence of production rules can be activated; recognition is assumed if any of these sequences produces the input string.  Thus, automata, as well as grammars and languages, can be either deterministic or non-deterministic.\\

A fundamental result of theoretical computer science \cite{Sipser} is the table in the body of this paper
that specifies which abstract automata can recognize which languages.  The simplest of the above automata is the {\bf finite state machine}, $\mathfrak F$, which can be characterized by

\begin{itemize}
\item a string of input symbols, $m \in \mathcal M \subseteq \mathcal A^*$, to be input one at a time,
\item a finite set of internal states, ${\mathcal S} = \{s_1, s_2, \ldots s_{|\mathcal S|}\}$, one of which is maintained as a ``current state'',
\item an initial state, $s_0 \in \mathcal S$,
\item a subset, $\mathcal S_F$, of $\mathcal S$ that constitutes a set of final states, entry into any one of which constitutes
a recognition event,
\item a transition table, $\mathbb T_{\mathfrak F}$, which maps pairs $(\alpha | \Lambda, s) \in \mathcal A \cup \{\Lambda \} \times \mathcal S$ to a next state, $s' \in \mathcal S$.  The notation $\alpha | \Lambda$
indicates that the transition table can either read a symbol in $m$ to trigger the $s \rightarrow s'$ transition or this transition can occur spontaneously, by ``reading'' the empty string, $\Lambda$, the latter transition being
known as a lambda transition.  A standard result in automata theory (see \cite{Sipser}, for example) guarantees that $\mathfrak F$ is no less powerful if we assume that every pair $(\alpha, s)$ maps to a unique next state $s'$; that is,  $\mathfrak F$ is deterministic.  $\mathfrak F$ is assumed to read successive symbols in $m$ or make lambda transitions, applying the mappings in $\mathbb T_{\mathfrak F}$ in both cases, to determine the corresponding state transitions until a state in $\mathcal S_F$ has been reached.\\
\end{itemize}
The {\bf pushdown automaton}, $\mathfrak P$, augments $\mathfrak F$ with a stack, $\mathbb S_\mathfrak P$.  
Thus, $\mathfrak P$ can be characterized by
\begin{itemize}
\item a start state, finite sets of internal states and final states, similar to those of $\mathfrak F$, 
\item a stack, $\mathbb S_\mathfrak P$, which is an ordered tuple, $(\beta_1, \beta_2, \beta_3, \ldots )$, of a finite, but arbitrarily large number of symbols chosen from some alphabet, $\mathcal B$, which may or may not be the same as $\mathcal A$,
\item an initialization for $\mathbb S_\mathfrak P$ at the start of processing,
\item a symbol, $\beta_0 \in \mathcal B$, said to be the ``top of the stack'' (The stack ordered tuple serves as a memory; the $\beta$'s of which it is comprised can only become involved in processing by making their way to the top of the stack via the $\mathtt{POP}$ operation described below.),
\item a transition table, $\mathbb T_{\mathfrak P}$, which maps quadruples $ (\alpha | \Lambda, \beta_0, \mathbb S_\mathfrak P, s)\in \mathcal A \cup \{\Lambda\} \times \mathcal B \times \mathcal B^* \times\mathcal S$ to one or more triples $(\beta'_0, \mathbb S'_\mathfrak P, s') \in \mathcal B \times \mathcal B^* \times\mathcal S$, where either
\begin{itemize}[label=\textopenbullet	]
\item $\beta_0$ retains its current value and nothing happens to $\mathbb S_\mathfrak P$ (so that $\beta_0 =\beta'_0$ and $\mathbb S_\mathfrak P =\mathbb S'_\mathfrak P$), but $s \neq s'$, or 
\item $\beta_0$ is replaced by some other element of $\mathcal B$, but nothing happens to $\mathbb S_\mathfrak P$ (so that $\beta_0 \neq \beta'_0$ but $\mathbb S_\mathfrak P =\mathbb S'_\mathfrak P$), or
\item $\beta_0$ retains its current value but it is also ``pushed'' onto $\mathbb S_\mathfrak P$, which then becomes the ordered tuple $(\beta_0, \beta_1, \beta_2, \beta_3, \ldots )$, 
 (so that $\beta_0 = \beta'_0$ but $\mathbb S_\mathfrak P \neq \mathbb S'_\mathfrak P$), or
\item $\beta_0$ is replaced by $\beta_1$, which has been ``popped'' from $\mathbb S_\mathfrak P$, which then becomes $(\beta_2, \beta_3, \ldots )$
(so that, in general, $\beta_0 \neq \beta'_0$ and $\mathbb S_\mathfrak P \neq \mathbb S'_\mathfrak P$)
\end{itemize} 
Once again, it is assumed that $\mathfrak P$ reads successive symbols in $m$, applying the mappings in $\mathbb T_{\mathfrak P}$ after each new symbol has been read to determine the corresponding state transitions .
The non-deterministic version of the pushdown automaton is strictly more powerful than the deterministic version, in that there are languages that can be recognized by the former, but not by the latter \cite{AHU}.

\item The {\bf bounded linear Turing machine}, $\mathfrak B$, and the {\bf general Turing machine}, $\mathfrak T$, augment the pushdown automaton with a second stack.  We denote $\mathfrak B$'s second stack as $\mathbb S \mathbb S_\mathfrak B$, and $\mathfrak T$'s second stack as $\mathbb S \mathbb S_\mathfrak T$.  The only difference between $\mathfrak B$ and $\mathfrak T$ is that both of $\mathfrak B$'s stacks can be no longer than a finite multiple of $|m|$ that is specified before processing begins.  It can be shown \cite{AHU} that deterministic Turing machines and linear bounded automata are no less powerful than their non-deterministic counterparts, in that they can recognize precisely the same set of languages.\\

There are many versions of $\mathfrak T$ that can be shown to be equivalent.  One such that might be more familiar to some
readers consists of
\begin{itemize}[label=\textopenbullet	]
\item a finite set of internal states, ${\mathcal S} = \{s_1, s_2, \ldots s_{|\mathcal S|}\}$, one of which is maintained as a ``current state'', 
\item an initial state, $s_0 \in \mathcal S$,
\item a semi-infinite tape ${\mathcal T}$, consisting of ``cells'', $C_0, C_1, \ldots$, which can contain either a blank or a single element of ${\mathcal A}$,
\item a final state, $s_H \in \mathcal S$, where the subscript ``H'' indicates that processing halts upon entry into that state, 
\item a ``read head'' that points to a specific cell in ${\mathcal T}$
\item a mechanism for moving the read head one cell to the left or right, thus positioning it to read this adjacent cell,
\item a transition table, $\mathbb T$, which maps pairs $(\alpha, s) \in \mathcal A \times \mathcal S$ to the triple 
$(\alpha', s', \mathtt{L | R})$ , with $\alpha' \in \mathcal A$, $s' \in \mathcal S$, and $\mathtt{L | R})$ indicating that the read head should move to
either the left or right.
\end{itemize}
\end{itemize}

\section{On the Turing Equivalence of the Matching Engine and Its Implications}

A Minsky machine is an abstract automaton supporting 
\begin{itemize}
\item a finite number of registers, $R_1, R_2, \ldots, R_m$, which are memory cells, each of which can hold an arbitrarily large 
non-negative integer, $s$,
\item a controller, $\mathtt C$, for sequencing instruction processing,  
\item a list of instructions, $I_1, I_2, \ldots, I_n$, which consist of
\begin{itemize}[label=\textopenbullet	]
\item an instruction, $\mathtt{INC}(R_i, I_k)$, that increments register $R_i$ by one and signals the controller to process 
instruction $I_k$ unconditionally, 
\item an instruction, $\mathtt{DECJZ}(R_i, I_k, I_\ell)$, that decrements register $R_i$ by one and signals the controller to 
process instruction $I_k$ 
unless the register subsequently holds zero, in which case the controller is signaled to process instruction $I_\ell$. 

\end{itemize}
\end{itemize}

Remarkably enough, Minsky machines have the same computational capabilities as Turing machines \cite{Minsky}.  This fact
is made plausible because, per \cite{Minsky}, a Minsky machine can do arbitrary arithmetical operations on the contents of 
its registers and a Turing machine tape with $|\mathcal A|$ symbols per cell is effectively the base $|\mathcal A|$ representation of 
a single integer.  For example, here is how 
a two register Minsky machine can multiply the existing contents of register $R_1$ by two after clearing the contents of $R_2$:
\\
\begin{table}[H]
\centering
\begin{tabularx}{\textwidth}{c l X}
\hline
\textbf{Instruction \#} & \textbf{Instruction} & \textbf{Explanation and Notes} \\
\hline
\multicolumn{3}{l}{\textit{Initialization}} \\
1 & \texttt{DECJZ}($R_2$, 1, 2) & Decrement $R_2$ until it is zero, in preparation for placing the final result there \\
2 & \texttt{DECJZ}($R_1$, 3, 5) & Decrement $R_1$ as long as it is not zero, thus ensuring that instructions 3 and 4 will execute exactly $N$ times \\
3 & \texttt{INC}($R_2$, 4) & \\
4 & \texttt{INC}($R_2$, 2) & \\
5 & \texttt{HALT} & $R_2$ now holds $2N$ \\
\hline
\end{tabularx}
\caption{Minsky Machine code for computing $2N$, for an arbitrary positive integer $N$.}
\label{tab:minsky-2N}
\end{table}                                  

We demonstrate the Turing completeness of the action of smart order routing (SOR) on the LOB by making the correspondence
in Table 2 between the various elements of the Minsky machine and the SOR/LOB action for the order book for each of the $n$
securities.  We assume a tick size of $\theta$.
\begin{table}[H]
\caption{A Correspondence between a Minsky Machine and SOR}
\begin{tabularx}{\textwidth}{|l|X|c|}
\hline\hline
{\bf Minsky Machine Element} & {\bf Market Element} & {\bf Notes} \\
\hline\hline
Register $R_i$    & Order book for security $i$     & \  \\
\hline
Initialization & Resting limit/iceberg $\mathtt{SELL}$ order for security $i$ at price $P_i + \theta$ 
of sufficient size to guarantee a transaction at this price if there is no transaction at $P_i$.& \ \\
\hline
$\mathtt{INC}(R_i, I_k)$ & Resting limit $\mathtt{SELL}$ order for one unit of security $i$ at price $P_i$, 
                                        then place order $k$ when order is executed.& \   \\
\hline
$\mathtt{DECJZ}(R_i, I_k, I_l)$ & Place an $\mathtt{IOC\ BUY}$ order for one unit of security $i$ at price $P_i+\theta$. 
                                                  If order is executed at price $P_i$, place order $I_k$.  If it is executed at $P_i+\theta$,
                                                  place order $I_l$.& (1, 2) \\
\hline
Controller &``hard coded'' lookup table of orders, executed as specified by execution of previous order
& (3) \\
\hline
\end{tabularx}
\end{table}

Notes:
\\  
\begin{enumerate}
\item The combination of the resting limit  $\mathtt{SELL}$ order for one unit of security $i$ at price $P_i+\theta$ and the
$\mathtt{IOC\ BUY}$ order for one unit of same security at the same price corresponds to a $\mathtt{DECJZ}$ (decrement 
the register by one and jump on zero):  if there are any existing limit orders for the security at price $P_i$, the 
$\mathtt{IOC\ BUY}$ order will trigger a transaction because $P_i$ is strictly smaller than the limit price of $P_i + \theta$
associated with the order.  This results in one less unit of the security available for sale, as required.  The SOR engine then 
moves on to order $I_k$.  If 
there are no resting limit sell orders for security $i$ at price $P_i$, the $\mathtt{IOC\ BUY}$ order is stlll executed because 
of the iceberg $\mathtt{SELL}$ order for the security at price $P_i + \theta$.  However, the next order is processed is $I_l$, 
rather than $I_k$.

\item The $\mathtt{IOC\ BUY}$ order is an IOC order simply to avoid the possibility that the best execution software of the 
exchange determines that there is a resting limit order for security $k$ at price $P_k$ on some other exchange, even though
there is no such order on the exchange of interest.

\item The matching engines currently provided by most exchanges do not offer the conditional execution envisioned by the 
controller above, but the smart order routing (SOR) systems of individual brokers do, as evidenced by the widespread 
popularity of such basket trading strategies as index arbitrage. (In fact, the SOR systems have much more demanding 
requirements, such as the handling of the complexities of 
``leg risk'', that, for example, only some of the trades in the basket can be executed at required price levels.)
\end{enumerate}


\begin{thebibliography}{99}

\bibitem{Weinberger24} Weinberger, Edward D.,  ``Towards a Theory of Pragmatic Information'', available 
on arXiv as https://arxiv.org/abs/2403.12324.
\bibitem{ShannonAndWeaver62}  Shannon, Claude and Weaver, Warren. {\it The Mathematical Theory of Communication}, University of Illinois Press, 1962.
\bibitem{SpeakingRate} Christophe Coupé {\it et al.} , ``Different languages, similar encoding efficiency: Comparable information rates across the human communicative niche", {\it Science Advances}, {\textbf 5} \#9 (2019).  Accessible on the web as science.org/DOI/10.1126/sciadv.aaw2594
\bibitem{CoverAndThomas} Cover, Thomas M. and Thomas, Joy, A., {\it Elements of Information Theory}, $2^{nd}$ Ed., John Wiley \& Sons, Inc., 2006.
\bibitem{Kullback} Kullback, Solomon, 	{\it Information Theory and Statistics}, Dover Publications, 1997.
\bibitem{Sipser} Sipser, Michael, {\it Introduction to the Theory of Computation}, Third Edition, Cengage Learning, 2012.
\bibitem{Turing} https://cstheory.stackexchange.com/questions/2515/can-a-probabilistic-turing-machine-solve-the-halting-problem
\bibitem{DimensionInComplexityClasses} Lutz, Jack H., ``Dimension in Complexity Classes'', {\it SIAM J. COMPUT.}, {\textbf 32} (5), 1236-1259 (2003).
\bibitem{Pushdown} Doty, David and Nichols, Jared, ``Pushdown Dimension'', {\it Theoretical Computer Science}, {\textbf 381} 
105-123 (2007).
\bibitem{wow} Pierre Baldi, Laurent Itti, ``Of bits and wows: A Bayesian theory of surprise with applications to attention," {\it Neural Networks}, {\bf 23}, Issue 5, 649-666, 2010.
ISSN 0893-6080,
https://doi.org/10.1016/j.neunet.2009.12.007.
(https://www.sciencedirect.com/
science/article/pii/S0893608009003256).
\bibitem{AHU} Hopcroft, John E., Motwani, Rajeev, Ullman, Jeffrey D., {\it Introduction to Automata Theory, Languages, and
Computation}, $2^{nd}$ Ed., Addison Wesley, 2001.
\bibitem{ValiantParsing}Valiant, Leslie, General context-free recognition in less than cubic time. Carnegie Mellon University. Journal contribution. https://doi.org/10.1184/R1/6605915.v1, 1974.
\bibitem{PSPACE}Arora, Sanjeev; Barak, Boaz, {\it Computational Complexity: A Modern Approach}, Cambridge University Press, p. 92, 2009. ISBN 978-1-139-47736-9.
\bibitem{Chaitin} Chaitin, G. J., "A Theory of Program Size Formally Identical to Information Theory", {\it J. Assoc. Comput. Mach.} {\bf 22}, 329-340, 1975.
\bibitem{CrutchfieldID} Crutchfield, James.  ``Reconstructing Language Hierarchies'', in {\it Information Dynamics}, ed. by H. A. Atmanspracher, {\it et al}, Plenum Press, New York, 1991.
\bibitem{LutzIndividualDimension}  Lutz, J. H., ``The dimensions of individual strings and sequences''. {\it Information and Computation'}, {\textbf 187}(1):49-79 (2003).
\bibitem{Schnorr} Schnorr, C. P.,  ``A unified approach to the definition of random sequences",  {\it Mathematical Systems Theory}, {\textbf 5} (3), 246-258 (1971).
\bibitem{MLRandomness} Martin-L\"of, Per., ``The Definition of Random Sequences'', {\it Information and Control}, {\textbf 9} 602-619 (1966).
\bibitem{Champernowne} Champernowne, D. G. ``Construction of decimals normal in the scale of ten'', {\it J. London Math. Soc.}, {\textbf 2}(8):254-260, 1933.
\bibitem{Maymin} Maymin, Philip Z., "Markets are efficient if and only if P= NP." {\it Algorithmic Finance} {\bf1.1}, 1-11 (2011).
\bibitem{Fama70} Fama, Eugene F.  ``Efficient Capital Markets: A Review of Theory and Empirical Work'' {\it Journal of Finance}, {\textbf XXV}, No. 2, pp. 383-417 (1970).
\bibitem{Fama91} Fama, Eugene F.  ``Efficient Capital Markets: II'' {\it Journal of Finance}, {\textbf XLVI}, No. 5, pp. 1575-1617 (1991).
\bibitem{Murphy} Murphy, John J., {\it Technical Analysis of the Financial Markets:  A Comprehensive Guide to Trading Methods 
and Applications}, New York Institute of Finance, 1999.
\bibitem{ChatGPT} OpenAI, {\it ChatGPT-4}  [Large language model] https://chatgpt.com/c/6785dad4-7b7c-8010-a097-033debbacdb2;  Prompt was "What is a set of production rules for the regular language consisting of strings consisting only of the characters "a" and "b" such that the string has an even number of "a''s?
\bibitem{Multifractal} Jiang, Zhi-Qiang, {\it et. al.}, ``Multifractal analysis of financial markets'', {\it Reports on Progress in Physics}, {\bf 82} (12), 125901 (2019).
\bibitem{ComputationalView} Hasanhodzic, J., Lo, A.W., Viola, E. ``A computational view of market efficiency'', {\it 
Computational Finance}, {\bf 11} (7) 1043-1050 (2011).
\bibitem{ChaitinConstant} Weisstein, Eric W. ``Chaitin's Constant'', From MathWorld--A Wolfram Resource. https://mathworld.wolfram.com/ChaitinsConstant.html, retrieved on January 11, 2026 at 8:17 PM EST/
\bibitem {Weizaeker} Weizsäcker, E.U. von, ``Erstmaligkeit und Bestätigung als Komponenten der
pragmatischen Information'', in {\it Offene Systeme} {\textbf I}, ed. by E.U. von Weizsäcker, 82-113, Klett, 1974.
\bibitem{Tegmark} Lin, Henry W. and Tegmark, Max, ``Criticality in Formal Languages and Statistical Physics'', 
{\it Entropy}, {\bf 19}, 299 (2017).  Available on the web as http://www.mdpi.com/1099-4300/19/7/299.
\bibitem{SFIonDoubleAuction} Friedman, Daniel and Rust, John, editors. {\it The Double Auction Market: Institutions, 
Theories, and Evidence}, Proceedings Volume XIV, Santa Fe Institute Studies in the Sciences of Complexity, Taylor \& Francis,
1993.  ISBN 0-201-62459-1.
\bibitem{NYSEReports} Trading at NYSE (Rule 605 Statistics).  https://www.nyse.com/trade/reports.
\bibitem{WSJ} https://www.wsj.com/finance/stocks/gamestop-mania-highlights-shift-to-dark-trading-11613125980.
\bibitem{Iceberg} https://arxiv.org/pdf/1909.09495.
\bibitem{Gold} Gold, E. M., ``Complexity of automaton identification from given data'', {\it Information and Control}, {\bf 37}, 302–320 (1978).
\bibitem{POMDB} Papadimitriou, Christos H., and John N. Tsitsiklis, “The Complexity of Markov Decision Processes,” 
{\it Mathematics of Operations Research} {\bf 12}, no. 3 (1987): 441–50. http://www.jstor.org/stable/3689975.
\bibitem{CommunicatingMachines} Brand, Daniel and Zafiropul o, Pitro, ``On Communicating Finite-State Machines'', {\it Journal of the ACM} {\bf 30} No. 2323–342 (1983).
\bibitem{NoFreeLunch} Barberis, Nicholas and Thaler, Richard, ``A Survey of Behavloral Finance'', in {\it Handbook of the Economics of Finance}, Elsevier, Amsterdam, pp. 1051–1121.
\bibitem{groping} Negishi, Takashi,``Tâtonnement and Recontracting'', in {\it The New Palgrave Dictionary of Economics}, $3^{rd}$ Ed., Palgrave, Macmillan, pp 
\bibitem{Minsky} Minsky, Marvin, {\it Computation: Finite and Infinite Machines}, $1^{st}$ Ed., Englewood Cliffs, New Jersey, USA: Prentice-Hall, Inc. p. 214  (1967).
\bibitem{EfficientImpossibility} Grossman, Sanford J. and Stiglitz, Joseph E., ``On the Impossibility of Informationally
Efficient Markets'', {\it The American Economic Review}, {\bf 70}, No. 3, pp. 393-408 (1980).
\bibitem{Markomata} Mirowski, P., ``Inherent Vice: Minsky, Markomata, and the tendency of markets to undermine 
themselves'',  {\it Journal of Institutional Economics}, {\bf 6}, pp. 415–443 (2010). 
\bibitem{ProspectTheory} Kahneman, Daniel and Tversky, Amos, ``Prospect Theory: An Analysis of Decision under Risk'',
{\it Econometrica}, {\bf 47}, No. 2, pp. 263-292 (1979). 
\end{thebibliography}
\end{document}